\newtheorem{theorem}{Theorem}
\newtheorem{lemma}[theorem]{Lemma}
\theoremstyle{definition}
\newtheorem{definition}[theorem]{Definition}
\theoremstyle{remark}
\newtheorem{remark}{Remark}
\newcommand{\C}{\mathbb{C}}
\renewcommand{\epsilon}{\varepsilon}
\renewcommand{\phi}{\varphi}
\newcommand{\R}{\mathbb{R}}
\newcommand{\Z}{\mathbb{Z}}
\newcommand{\om}{\omega}
\newcommand{\Om}{\Omega}
\newcommand{\lat}{\mathcal{L}}
\newcommand{\E}{\mathbb{E}}
\newcommand{\lan}{\langle}
\newcommand{\ran}{\rangle}
\newcommand{\g}{\gamma}
\newcommand{\vertiii}[1]{{\left\vert\kern-0.25ex\left\vert\kern-0.25ex\left\vert #1 
    \right\vert\kern-0.25ex\right\vert\kern-0.25ex\right\vert}}
\newcommand{\eps}{\varepsilon}
\DeclareMathOperator{\Tr}{Tr}
\DeclareMathOperator{\den}{den}
\title{On the reduced Hartree-Fock equations with a  Small  Anderson Type Background Charge Distribution}
\author{Ilias Chenn and Shiwen Zhang}
\newcommand{\Addresses}{{
  \bigskip
  \vskip 0.08in \noindent --------------------------------------

  \footnotesize

\medskip

I.~Chenn, \textsc{Department of Mathematics, Massachusetts Institute of Technology, 2-252b, 77 Massachusetts Avenue, 
Cambridge, MA  4307 USA }\par\nopagebreak
  \textit{E-mail address}: \texttt{nehcili@mit.edu}

\medskip

S.~Zhang, \textsc{School of Mathematics, University of Minnesota, 206 Church St SE, Minneapolis, MN 55455 USA}\par\nopagebreak
  \textit{E-mail address}: \texttt{zhan7294@umn.edu }
  
}}
\date{}
\begin{document}

\maketitle

\begin{abstract}
We demonstrate that the reduced Hartree-Fock equation (REHF) with  a small  Anderson type background charge distribution has an unique stationary solution by explicitly computing a screening mass at positive temperature.
\end{abstract}

\section{Introduction}
Density functional theory (DFT) has become the stable of modern quantum science thanks to its tractability. One of its notable application is in describing the electronic structure of disordered crystals in solid state physics and material sciences. Nevertheless, the disordered nature of certain crystal structures and the long range Coulomb interaction have caused considerable difficulties in their theoretical studies. To illustrate this difficulty, we restrict our attention to dimension three (3) for the rest of the paper. 

 In condensed matter physics, the density of electrons is governed by the Kohn-Sham (KS) equation of DFT, whose mathematical properties have become an area of intense research, for example, \cite{AACan, CDL,CLL,CSL, CLeBL, CLeBL2, CS, CS1, ELu, Nier, La, Lev, Levy, Levy2, Lieb3, Lieb, LS,PN}. These works consider various problems in KS model, e.g., \cite{AACan} is concerned with KS models in the local density approximation framework, \cite{CDL,CLL,CSL, La, LS} consider  interactions in crystals with local defects or systems with  infinite particles, the works \cite{CLeBL, CLeBL2} consider thermodynamic limit for a periodic model. The literature on KS model is huge, and we are not able to give a thorough review here. We refer the readers to, e.g.,  \cite{ELu} and reference therein for more details.   One central focus of these studies is the existence and uniqueness problem. In this context, a prolific approach  is through the use of variational arguments. Given a nuclear charge distribution $\kappa$ on a set $S \subset \R^3$, the Kohn-Sham equation are the Euler-Lagrange equation of the Kohn-Sham energy, which is a functional of the electron density matrix $\gamma$ \cite{KS}: 
\begin{align}
    E_{\rm KS}(\gamma) =& \Tr((-\Delta)\gamma) \notag \\
    &+ \frac{1}{2} \int_{ {\R^3\times \R^3}} dxdy \, \frac{(\kappa(x)-\rho_\gamma(x))(\kappa(y)-\rho_\gamma(y))}{|x-y|} \label{eqn:Coulumb-bad-term} \\
    &+ Ex(\rho_\gamma), \notag
\end{align}
where $\rho_\gamma(x) = \gamma(x,x)$ is the electron density ($\gamma(x,y)$ is the integral kernel of $\gamma$) and $Ex$ is the exchange-correlation energy depending solely on $\rho_\gamma$, instead of $\gamma$.  However, a major impediment to the variational approach is that  the Coulomb self-interaction energy in \eqref{eqn:Coulumb-bad-term} becomes ill defined if $S$ is unbounded and $\kappa - \rho_\gamma$ lacks decay.  A notable example is when $\kappa$ is the Anderson type disorder. 

At zero temperature,  some mitigation of the long range Coulomb problem involves  studying minimizers of $E_{\rm KS}$ on a finite domain and  then passing to their  thermodynamic  limits to include periodic $\kappa$ and $\rho_\gamma$ (see for example, \cite{CLeBL, CLeBL2, CS1}), or introducing a screening mass as in the case of Yukawa potential \cite{CLL}. On the other hand, a natural screening is observed at positive temperature \cite{CS,Lev}. 
    
In this work, we follow the latter school of thoughts. We show that the Kohn-Sham equation have unique solutions despite the presence of long range Coulomb interaction and disordered background distribution $\kappa$ at positive temperature. To elucidate the main ideas, we drop the exchange correlation term $Ex$ in $E_{\rm KS}$. The resulting energy and equation bear the name of  reduced Hartree-Fock (REHF) equation and energy, respectively. Let $\beta$ denote the inverse temperature. At positive temperature $\beta^{-1}$, the REHF free energy of $\gamma$  is
\begin{align}
    F_{\rm REHF}(\gamma) &:= E_{\rm REHF}(\gamma) - \beta^{-1}S(\gamma), \\
    E_{\rm REHF}(\gamma) &:= \Tr((-\Delta)\gamma) \notag +\frac{1}{2} \int_{ {\R^3\times \R^3}} dxdy \, \frac{(\kappa(x)-\rho_\gamma(x))(\kappa(y)-\rho_\gamma(y))}{|x-y|}, \\
    S(\gamma) &:= -\Tr(\gamma \ln(\gamma) + (1-\gamma)\ln(1-\gamma)),
\end{align}
whenever $F_{\rm REHF}$ is well defined. We remark that we have not specified the space on which $\gamma$ acts as well as the exact meaning of $\Tr$. Since we will be working with and describing the Euler-Lagrange equations later, this inconvenience is only formal. The REHF equation is the Euler-Lagrange equation of $F_{\rm REHF}$ with a fixed number of particles $\Tr \gamma$.  Despite the fact that $F_{\rm REHF}$ is a functional of $\gamma$, the REHF equation can be solely written in terms of the electron density $\rho$ (hence the name DFT), the (auxiliary) electric potential $\varphi$, and the chemical potential $\mu$ (i.e. the Lagrange-multiplier):
\begin{align}
    \rho &= \den f_{\rm FD}(\beta(-\Delta -\phi - \mu)), \label{eqn:rho-eqn-1}\\
    -\Delta \phi &= 4\pi \big(\kappa - \rho \big), \label{eqn:sub-rho-in-phi}
\end{align}
where $f_{\rm FD}$ is the Fermi-Dirac distribution
\begin{equation}
    f_{\rm FD}(x) = \frac{1}{1+e^x}.
\end{equation}
Moreover, we have introduced the $\den$ operator mapping operators $A$ on $L^2(\R^3)$ to its density $A(x,x)$ where  $A(x,y)$ is its integral kernel. A more precise definition of $\den$ can be found in Appendix \ref{app:per-vol-set-up} or \cite{CLL}. Nevertheless, one can recover $\gamma$ via
\begin{equation}
    \gamma = f_{\rm FD}(\beta(-\Delta -\phi - \mu)).
\end{equation}
However, a yet more advantageous view is to substitute $\rho$ from \eqref{eqn:rho-eqn-1} into \eqref{eqn:sub-rho-in-phi} to obtain
\begin{equation}
    -\frac{1}{4\pi}\Delta \phi = \kappa - \den f_{\rm FD}(\beta(-\Delta -\phi - \mu)). \label{eqn:REHF}
\end{equation}
It is in this view that one can see a clearer presence of a screening mass at positive temperature. We will henceforth call equation \eqref{eqn:REHF} the reduced Hartree-Fock (REHF) equation for the purpose of this paper.

To specify the class of disorders in $\kappa$ and the solution space on which we study \eqref{eqn:REHF}, let $\lat$ denote a (non-degenerate) Bravais lattice in $\R^3$ and $Q$ denote any fundamental domain. That is, $\lat$ translates of $Q$ tile $\R^3$. A typical example of $Q$ is the Wigner-Seitz cell. Moreover,  assume that $\tau=\{\tau_\ell\}_{\ell \in \lat}$ is a group of 
measure preserving $\lat$-actions on the probability space $\Om$.

\begin{definition} \label{def:stationary-func}
A measurable function $f$ on $\R^3 \times \Om$ is said to be ($\lat$) stationary if
\begin{equation}\label{eq:lat}
	f(x-\ell, \tau_\ell \om) = f(x,\om)
\end{equation}
for all $(x,\om) \in \R^3 \times \Om$ and $\ell \in \lat$.
\end{definition}

A notable example of this category  is the  Anderson potential 
\begin{equation}
    \kappa(x,\omega)=\sum_{\ell \in \lat=\Z^d}q_{\ell}(\omega)\chi(x-\ell),
\end{equation}
where $\chi\in C_c^\infty(\R^d)$ and the $q_{\ell}(\omega)$ are i.i.d. random variables. The Anderson type potential is of 
interest to physicists and mathematicians alike \cite{CL,Kir}.


 We define the space of stationary $\kappa$'s and record the conventions used before we state our main result.  Let $L^q(Q)$ denote the usual $L^q$ space over $Q$ with the standard $L^q$ norm and $L^2$ inner products when $q=2$. We will use the notation $L^p_\om L^q_x$ to denote the norm-completed
\begin{equation}
	\{ f(x,w) \in L_{\rm loc}^q(\R^3) \times L^p(\Om) : \text{$f$ is stationary and } \|f\|_{L^p_\om L^q_x} < \infty \},
\end{equation}
where 
\begin{equation}
	\|f\|_{L^p_\om L^q_x}^p := \E \|f\|_{L^q(Q)}^p.
\end{equation}
When $p=\infty$ or $q=\infty$, the associated norm is the usual sup norm. We remark that due to egodicity, the above norms do not depend on the location of the fundamental domain $Q$  (see Lemma \ref{lem:indep-of-loc-of-Q} in Appendix \ref{app:per-vol-set-up}). Consequently, we do not display $Q$ in the subscripts of the norms. 

With these notations, equation \eqref{eqn:REHF} is to be understood as follows. We require $\kappa \in L^\infty_\om L^2_x$ while we look for a solution $(\varphi,\mu)$ in $L^\infty_\om H^2_x\times \R$. Nevertheless, $-\Delta -\phi - \mu$ and $f_{\rm FD}(\beta(-\Delta -\phi - \mu))$ are operators on $L^2(\R^3)$, where the latter is defined by the holomorphic functional calculus. Consequently, $\den f_{\rm FD}(\beta(-\Delta -\phi - \mu))$ is a stationary function since $\phi$ is stationary.

Finally, we will use 
\[
    \hat f(k)=(2\pi)^{-3/2}\int_{\R^3} f(x)e^{-i x\cdot k} dx
\]
as our unitary Fourier transform.  Moreover, we will denote by $C_{a,b,\cdots}$ finite constants depending only on the parameters $a,b,\cdots$. By an abuse of notation and by choosing an even large constant if necessary, $C$ or $C_{a,b,\cdots}$ may stand for different constants simultaneously. Lastly, we will write $ A\lesssim B, A\gtrsim B$ if $A\le CB$ or $A\ge CB$ for some absolute constant $C$, respectively.

\subsection{Main results}

We notice that REHF exhibits a symmetry
\begin{equation}\label{eq:sym}
    (\varphi,\mu)\mapsto (\varphi+t,\mu-t)
\end{equation}
for any $t\in\R$. That is, 
if  $(\varphi,\mu)$ is  a solution of \eqref{eqn:REHF}, then so is $(\varphi+t,\mu-t)$. Therefore, a solution of REHF is an equivalence class of $(\varphi,\mu)\in L^\infty_\om H^2_x\times \R$ with respect to \eqref{eq:sym}. Consequently, the notion of uniqueness is defined up to this equivalence class.  With this definition of uniqueness, we state our main results. 

Let $Q$ denote a fundamental domain of the lattice $\lat$.  Let $\kappa \in L^\infty_w  L_x^2 $   and $\kappa_0$ denote its expected spacial average:
\begin{equation}
   \kappa_0 = \frac{1}{|Q|} \E \int_Q \kappa.
\end{equation}
Let $\kappa' = \kappa - \kappa_0$. We have the following main results below. Theorem \ref{thm:trans-invar-REHF} considers the simplest case where $\kappa \in \R$ is homogeneous. That is, the material is a jellium. Theorem \ref{thm:main-result} extends Theorem \ref{thm:trans-invar-REHF} by considering random disordered background $\kappa$, as a perturbation from the jellium solution with background potential $\kappa_0$.

\begin{theorem} \label{thm:trans-invar-REHF}
Let $0 < \kappa \in \R$. There exists positive constants $c_1=\dfrac{1}{8\pi^{3/2}}$ and $c_2=\dfrac{1}{12\pi^2}$ such that if $\beta \in \R$ is positive and satisfies
\begin{equation}
    \kappa > \frac{c_1}{\beta^{3/2}}, \label{eqn:trans-invar-ass}
\end{equation}
then there exists a solution $(0, \mu) \in \R \times \R$ to the REHF equation \eqref{eqn:REHF}. Moreover,
\begin{equation}
    0 < \frac{1}{\beta}\log(\kappa\beta^{3/2}/c_1) < \mu < (\kappa/c_2)^{2/3}. \label{eqn:mu-bounds-free-case}
\end{equation}
\end{theorem}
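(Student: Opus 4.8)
The plan is to look for a solution with $\phi \equiv 0$, which collapses \eqref{eqn:REHF} to a single scalar equation for $\mu$. When $\phi\equiv 0$ the operator $-\Delta-\mu$ on $L^2(\R^3)$ is translation invariant, so $f_{\rm FD}(\beta(-\Delta-\mu))$ is the Fourier multiplier with symbol $k\mapsto f_{\rm FD}(\beta(|k|^2-\mu))$, and by the definition of $\den$ in Appendix \ref{app:per-vol-set-up} (for a translation invariant operator the trace per unit volume equals the constant value of the integral kernel on the diagonal) one gets
\[
\den f_{\rm FD}\big(\beta(-\Delta-\mu)\big)\;=\;\frac{1}{(2\pi)^3}\int_{\R^3}\frac{dk}{1+e^{\beta(|k|^2-\mu)}}\;=\;\frac{1}{2\pi^2}\int_0^\infty\frac{r^2\,dr}{1+e^{\beta(r^2-\mu)}}\;=:\;g(\mu),
\]
the middle step being a passage to spherical coordinates. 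Since $-\frac{1}{4\pi}\Delta\phi=0$ for $\phi\equiv 0$, the pair $(0,\mu)$ solves \eqref{eqn:REHF} if and only if $g(\mu)=\kappa$.

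Next I would record the one-variable facts about $g$. The integrand decays like $e^{-\beta r^2}$, so $g(\mu)<\infty$ for every $\mu\in\R$; differentiating under the integral sign shows $g$ is smooth and strictly increasing; dominated convergence gives $g(\mu)\to 0$ as $\mu\to-\infty$; and the lower envelope below gives $g(\mu)\to+\infty$ as $\mu\to+\infty$. By the intermediate value theorem, for every $\kappa>0$ there is a $\mu$ (unique within this ansatz, by strict monotonicity) with $g(\mu)=\kappa$, which furnishes the solution $(0,\mu)$.

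Finally, \eqref{eqn:mu-bounds-free-case} follows by sandwiching $g$ between explicit powers and invoking monotonicity. For the \emph{upper envelope}, the strict inequality $\frac{1}{1+e^x}<e^{-x}$ together with $\int_0^\infty r^2 e^{-\beta r^2}\,dr=\frac{\sqrt\pi}{4}\beta^{-3/2}$ yields $g(\mu)<\frac{c_1}{\beta^{3/2}}e^{\beta\mu}$ with $c_1=\frac{1}{8\pi^{3/2}}$; evaluating at $\mu_-:=\frac1\beta\log\!\big(\kappa\beta^{3/2}/c_1\big)$ gives $g(\mu_-)<\kappa=g(\mu)$, hence $\mu>\mu_-$, and hypothesis \eqref{eqn:trans-invar-ass} is precisely what makes $\mu_->0$. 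For the \emph{lower envelope}, restricting the integral to $r\in[0,\sqrt{\mu}]$, where $e^{\beta(r^2-\mu)}\le 1$ so the Fermi factor is $\ge\frac12$, gives $g(\mu)>\frac{1}{4\pi^2}\cdot\frac{\mu^{3/2}}{3}=c_2\mu^{3/2}$ with $c_2=\frac{1}{12\pi^2}$ for $\mu\ge 0$; evaluating at $\mu_+:=(\kappa/c_2)^{2/3}$ gives $g(\mu_+)>c_2\mu_+^{3/2}=\kappa=g(\mu)$, hence $\mu<\mu_+$. Combining, $0<\mu_-<\mu<\mu_+$, which is \eqref{eqn:mu-bounds-free-case}.

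The only step requiring genuine care, rather than routine computation, is the reduction in the first paragraph: identifying $\den$ of the translation invariant operator $f_{\rm FD}(\beta(-\Delta-\mu))$ — with $\den$ understood in the per-unit-volume/thermodynamic sense of the appendix — with the convergent momentum integral $g(\mu)$. Once that identity is in hand, the rest is a monotonicity argument in a single real variable together with the two elementary bounds on the Fermi--Dirac factor.
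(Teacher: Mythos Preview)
Your proof is correct and follows essentially the same route as the paper: reduce to the scalar equation $g(\mu)=\kappa$ via translation invariance, bound $g$ above by $c_1\beta^{-3/2}e^{\beta\mu}$ using $f_{\rm FD}(x)<e^{-x}$ and the Gaussian integral, bound $g$ below by $c_2\mu^{3/2}$ by restricting to $r\in[0,\sqrt{\mu}]$, and invoke continuity/monotonicity. Your use of strict monotonicity to extract the strict inequalities in \eqref{eqn:mu-bounds-free-case} is in fact a bit cleaner than the paper's phrasing.
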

 
 \begin{remark}
 We will see in the proof that the density of the homogeneous Poisson equation \eqref{eqn:REHF} is a continuous increasing function of $\mu$. 
 We request the  technical condition \eqref{eqn:trans-invar-ass} in order to obtain a positive solution $\mu$. See equations \eqref{eqn:free-mu-upper}, \eqref{eqn:free-mu-lower} in the proof of Theorem \ref{thm:trans-invar-REHF} for more details.
 \end{remark}
 
\begin{theorem} \label{thm:main-result}
  Let  $0<\beta<\infty$ and $\kappa \in L^\infty_w  L_x^2$ be real valued. Assume that \eqref{eqn:trans-invar-ass} holds with $\kappa_0$ and let $\mu>0$ be given as in Theorem \ref{thm:trans-invar-REHF}. There exist constant $C_1,C_2>0$ depending on $\beta,\mu$ and $\kappa_0$ , see \eqref{eq:kappa-small},\eqref{eq:small-2}. If $\|\kappa'\|_{L^\infty_w L_x^2}\le C_1$, then the REHF equation \eqref{eqn:REHF} has a unique solution $(\phi, \mu) \in  L^\infty_w H^2_x \times \R$ in a neighborhood of $(0, \mu)$ with
\begin{equation} \label{eq:thm3-small}
	\| \phi\|_{L^\infty_w H^2_x} \le \,  C_{2}\|\kappa'\|_{L^\infty_w L_x^2}.
\end{equation} 
\end{theorem}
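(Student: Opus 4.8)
The plan is to regard \eqref{eqn:REHF} with $\kappa=\kappa_0+\kappa'$ as a perturbation of the jellium equation of Theorem~\ref{thm:trans-invar-REHF} and run a contraction (Newton/implicit function) argument for $\phi$, keeping $\mu$ frozen at the value produced there. Write $H_\phi:=-\Delta-\phi-\mu$ and $R(\phi):=\den f_{\rm FD}(\beta H_\phi)$; the fact that $(0,\mu)$ solves the jellium equation is exactly the identity $R(0)=\kappa_0$. Setting $L:=DR(0)$ for the Fr\'echet derivative of $R$ at $\phi=0$ and $\mathcal N(\phi):=R(\phi)-R(0)-L\phi$ for the (formally quadratic) remainder, equation \eqref{eqn:REHF} becomes
\begin{equation*}
\Bigl(-\tfrac1{4\pi}\Delta+L\Bigr)\phi=\kappa'-\mathcal N(\phi),
\end{equation*}
so that a solution in $L^\infty_\om H^2_x$ is precisely a fixed point of $\Psi(\phi):=\bigl(-\tfrac1{4\pi}\Delta+L\bigr)^{-1}\bigl(\kappa'-\mathcal N(\phi)\bigr)$. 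I will show $\Psi$ is a contraction on a ball of $L^\infty_\om H^2_x$ of radius proportional to $\|\kappa'\|_{L^\infty_\om L^2_x}$.

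The structural input is the positivity of $L$ and the resulting invertibility of the linearized operator, which is where the screening mass enters. Since $f_{\rm FD}$ is operator-monotone decreasing and $\phi\mapsto H_\phi$ is operator-monotone decreasing, $R$ is monotone increasing, hence $L\ge 0$; and since $H_0=-\Delta-\mu$ is translation invariant, $L$ is a convolution operator whose symbol $\chi_0(k)\ge 0$ is the positive-temperature Lindhard function, with $\chi_0(0)=\rho'(\mu)>0$, the derivative of the strictly increasing jellium density $\rho(\mu)=(2\pi)^{-3}\int_{\R^3}f_{\rm FD}(\beta(|k|^2-\mu))\,dk$. Therefore the symbol of $-\tfrac1{4\pi}\Delta+L$ equals $\tfrac{|k|^2}{4\pi}+\chi_0(k)$, which is continuous, equals $\chi_0(0)>0$ at $k=0$, is $\ge\tfrac{|k|^2}{4\pi}>0$ elsewhere, and grows like $|k|^2$; hence it is bounded below by a positive constant $m=m(\beta,\mu,\kappa_0)$ (the screening mass) and is comparable to $1+|k|^2$. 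This yields that $\bigl(-\tfrac1{4\pi}\Delta+L\bigr)^{-1}$ is bounded from $L^\infty_\om L^2_x$ to $L^\infty_\om H^2_x$ with norm controlled by $m^{-1}$: the passage from the $L^2_x$ Fourier-multiplier bound to the per-unit-volume $L^\infty_\om$ bound uses that the kernel of $L$, hence of the screened inverse, decays (exponentially, by the mass gap, via a Combes--Thomas argument), so that convolution against it is bounded on uniformly-local $L^2$ spaces while the $|k|^2$ growth of the symbol recovers two derivatives. Note also that this symbol has no zero, in particular not at $k=0$, so the linearized equation carries no solvability constraint; consequently $\mu$ may indeed be kept fixed, the neutrality $\frac1{|Q|}\E\int_Q\,\den f_{\rm FD}(\beta H_\phi)=\kappa_0$ being automatically enforced by averaging \eqref{eqn:REHF} over $\R^3\times\Om$ together with $\E\int_Q\kappa'=0$.

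The main obstacle is the nonlinear estimate: I must show $R$ is well defined and $C^{1,1}$ from a ball of $L^\infty_\om H^2_x$ into $L^\infty_\om L^2_x$, with
\begin{equation*}
\|\mathcal N(\phi_1)-\mathcal N(\phi_2)\|_{L^\infty_\om L^2_x}\le C_{\beta,\mu,\kappa_0}\bigl(\|\phi_1\|_{L^\infty_\om H^2_x}+\|\phi_2\|_{L^\infty_\om H^2_x}\bigr)\|\phi_1-\phi_2\|_{L^\infty_\om H^2_x}.
\end{equation*}
The starting point is the Sobolev embedding $H^2_x\hookrightarrow L^\infty_x$ in dimension three, so that a $\phi$ in the relevant ball is a bounded covariant multiplication operator and $H_\phi$ stays in a fixed region below the bottom of its spectrum; I then represent $f_{\rm FD}(\beta H_\phi)$ by the holomorphic functional calculus as a contour integral of the resolvent and expand in $\phi$ via the resolvent identity. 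First and second differences of the density are bounded using uniform resolvent estimates for $H_\phi$ along the contour together with the mapping properties of $\den$ on products of resolvents and multiplication operators (of Kato--Seiler--Simon type, adapted to the per-unit-volume trace ideals, as set up in Appendix~\ref{app:per-vol-set-up}); the two factors of $\phi$ in the quadratic term produce the required gain. The only real care is to carry all bounds in the $\E\|\cdot\|_{L^q(Q)}$ norms rather than ordinary trace norms, but the covariance of every operator involved and the decay of the relevant kernels make this routine once the resolvent bounds are fixed.

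With these ingredients the contraction argument is standard. Choosing the radius $r\asymp\|\kappa'\|_{L^\infty_\om L^2_x}$ and imposing $\|\kappa'\|_{L^\infty_\om L^2_x}\le C_1=C_1(\beta,\mu,\kappa_0)$ small enough that $C_{\beta,\mu,\kappa_0}\,\bigl\|\bigl(-\tfrac1{4\pi}\Delta+L\bigr)^{-1}\bigr\|\,r<\tfrac12$, the map $\Psi$ sends $\{\|\phi\|_{L^\infty_\om H^2_x}\le r\}$ into itself and is a $\tfrac12$-contraction there; its unique fixed point is the asserted solution, and iterating the defining relation gives $\|\phi\|_{L^\infty_\om H^2_x}\le 2\bigl\|\bigl(-\tfrac1{4\pi}\Delta+L\bigr)^{-1}\bigr\|\,\|\kappa'\|_{L^\infty_\om L^2_x}=:C_2\|\kappa'\|_{L^\infty_\om L^2_x}$. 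Uniqueness in a neighborhood of $(0,\mu)$ follows from the contraction property, any nearby solution lying in the ball; and since the symmetry \eqref{eq:sym} acts only through the constant functions, shifting a nearby solution $(\phi,\mu')$ to second coordinate $\mu$ lands it in the same ball, so fixing the representative $\mu$ of Theorem~\ref{thm:trans-invar-REHF} pins down a unique element of each equivalence class, completing the proof.
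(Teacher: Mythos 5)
Your overall architecture matches the paper's: freeze $\mu$ at the jellium value from Theorem~\ref{thm:trans-invar-REHF}, linearize $\den f_{\rm FD}(\beta(-\Delta-\phi-\mu))$ at $\phi=0$, establish a coercivity bound for the screened linear operator $L=-\tfrac1{4\pi}\Delta+M$ and a quadratic Lipschitz estimate for the remainder $N$, and close by a contraction in a ball of $L^\infty_\om H^2_x$. The nonlinear sketch (contour representation, resolvent expansion, Young-type bounds in the per-unit-volume norms) is essentially the paper's proof of Theorem~\ref{thm:nonlin}, though the paper uses a direct $L^1$-kernel/Young argument on the expansion $N=\sum_{n\ge2}N_n$ rather than Kato--Seiler--Simon bounds.

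Where you genuinely diverge is the transfer of the lower bound for $L$ from $L^2(\R^3)$ to the stationary space $L^\infty_\om L^2_x$. You argue that since $L$ is a translation-invariant Fourier multiplier with symbol $\tfrac{|k|^2}{4\pi}+\chi_0(k)\ge m>0$ growing like $|k|^2$, its inverse has an exponentially decaying kernel ``by Combes--Thomas,'' hence is bounded on uniformly-local $L^2$. The paper instead pairs $Lf$ against a test function $\psi=\chi\,(-\tfrac1{4\pi}\Delta+M)^{-1}(f|_Q)$, estimates the commutators $[-\Delta,\chi]$ and $[M,\chi]$, and arrives at the lower bound $m_Q=c_*\min(\mu,\sqrt{\mu})-C_{\beta,\mu}\,\ell(Q)^{-1}$ (Theorem~\ref{thm:LLowerBound-final}), which carries an error term that can be \emph{negative} for small fundamental domains; a non-trivial additional step (enlarging $Q$ to $nQ$ and then showing an $(n\lat)$-stationary unique solution is automatically $\lat$-stationary) is needed to make $m_Q>0$. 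Your proposal silently bypasses this entire discussion. If your decay argument works, that is a genuine simplification; but it is the crux of the theorem and you leave it unsubstantiated. Concretely: (i) $M$ is a nonlocal convolution operator, so a Combes--Thomas bound is not the off-the-shelf Schr\"odinger version — you would need either analytic continuation of the symbol (whose real-axis form, Lemma~\ref{lem:L0-explicit-compute}, involves $p^{-1}\int_0^\infty\ln|(\sqrt{4t}+p)/(\sqrt{4t}-p)|f_{\rm FD}\,dt$ and requires care near $\sqrt{4t}=p$) or a direct kernel estimate showing $L^1$-decay with sums over $\lat$ convergent; and (ii) to land in $L^\infty_\om H^2_x$ you also need the multipliers $\partial^\alpha\,(\tfrac{|k|^2}{4\pi}+\chi_0(k))^{-1}$ for $|\alpha|\le 2$ to have summable kernels, not just bounded symbols. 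Until these are filled in, the linear estimate — the part of the proof the paper devotes its longest section to — is an unproved claim in your write-up.
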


\begin{remark}
The positive assumption $\beta^{-1}>0$ is crucial for our main results. For technical reasons, the smallness condition on $\kappa'$ and the constant $C_2$ in \eqref{eq:thm3-small} decays to 0 as $\beta\to\infty$. Our approach does not apply to the zero temperature case. At the zero temperature, the existence of the solution to the reduced Hartree-Fock equations in disordered media have been studied and established in the previous work \cite{CLL,La}, when the interaction is short-range. See Proposition 4.5 in \cite{CLL} and Theorem 2.1 in \cite{La}. We refer readers to  these work and references therein for more related results. 
\end{remark}

By our remark immediately after Definition \ref{def:stationary-func}, we see that Theorem \ref{thm:main-result} provides existence and uniqueness results for  small  Anderson
 background charge distributions.

The paper is organized as follows. In section \ref{sec:trans-invar-soln}, we provide a proof for Theorem \ref{thm:trans-invar-REHF} in the case $\kappa = \kappa_0$ is a constant. Then, we prove Theorem \ref{thm:main-result} perturbatively by linearizing the REHF equation at the constant solution to the case $\kappa = \kappa_0$. 
The main proof of Theorem \ref{thm:main-result} is given in Section \ref{sec:main-proof}, modulo the core linear and nonlinear analysis. The linearized operator of the REHF equation is studied in Section \ref{sec:lin-anal} while the nonlinear analysis is given in Section \ref{sec:nonlin-anal}.

\vspace{1cm}

\noindent
\textbf{Acknowledgments.} We thank Svitlana Mayboroda for suggesting to us the topic of the current article. We would also like to thank Wei Wang for stimulating discussions. Chenn is grateful for I. M. Sigal for his insights and guidance.

Chenn is supported through a Simons Foundation Grant (601948 DJ) and a PDF fellowship from NSERC/Cette recherche a \'{e}t\'{e} financ\'{e}e par le CRSNG.  Zhang
is supported in part by the NSF grants DMS1344235, DMS-1839077, and Simons Foundation grant 563916, SM.

\section{Translation invariant solution: proof of Theorem \ref{sec:trans-invar-soln}} \label{sec:trans-invar-soln}

\begin{proof}[Proof of Theorem \ref{thm:trans-invar-REHF}]
If $\kappa \in \R$ and $\phi = 0$, then equation \eqref{eqn:REHF} becomes  an equation for $\mu$: 
\begin{equation}
  \kappa = \den f_{\rm FD}(\beta(-\Delta - \mu)). 
\end{equation}
Translation invariance of $-\Delta$ on $L^2(\R^3)$ shows that this equation can be rewritten as
\begin{equation}
    \kappa = \frac{1}{(2\pi)^{3 }}\int_{\R^3} dp\, f_{\rm FD}(\beta(p^2-\mu)).  \label{eqn:translation-invariant-REHF}
\end{equation}
Thus, the right hand side of \eqref{eqn:translation-invariant-REHF} becomes
\begin{equation}
 A(\mu):=    \frac{1}{(2\pi)^{3 }}4\pi \int_0^\infty dq \frac{q^2}{1+e^{-\beta \mu} e^{\beta q^2}}. \label{eqn:RHS-mu}
\end{equation}
An upper bound for $A(\mu)$ in \eqref{eqn:RHS-mu} is
\begin{align}
   \frac{1}{2\pi^2} \int_0^\infty dq\, q^2 e^{\beta \mu} e^{-\beta q^2} =& \frac{1}{2\pi^2} \,e^{\beta \mu} \beta^{- 3/2}  \, \int_0^\infty\, dq\, q^2 e^{-q^2} \\
    =& \ \frac{1}{8\pi^{3/2}}  e^{\beta \mu} \beta^{- 3 /2}:=B(\mu). \label{eqn:free-mu-upper}
\end{align}
As $\mu$ ranges from $0$ to $\infty$, we see that $B(\mu)$ in  \eqref{eqn:free-mu-upper} increases from $c_1 \beta^{-3/2}$ to $\infty$ for $c_1=\dfrac{1}{8\pi^{3/2}}$:
\[c_1 \beta^{-3/2}=B(0)\le B(\mu)\le B(\infty)=\infty.\]
Similarly, a lower bound for \eqref{eqn:RHS-mu} is
\begin{equation}
    \frac{1}{2\pi^2} \int_0^{\sqrt{\mu}} dq\,  \frac{q^2}{2} =  \frac{1}{12\pi^2} \mu^{3/2}:=C(\mu). \label{eqn:free-mu-lower}
\end{equation}
As $\mu$ ranges from $0$ to $\infty$,  the lower bound estimate $C(\mu)$ in  \eqref{eqn:free-mu-lower} increases from $0$ to $\infty$:
\[0=C(0)\le C(\mu)\le C(\infty)=\infty.\]
Thus, continuity (in $\mu$) of the right hand side of equation \eqref{eqn:translation-invariant-REHF} implies that a solution $\mu > 0$ exists if $\kappa>c_1 \beta^{-3/2}$. Solving for $\mu$ in \eqref{eqn:free-mu-upper} and \eqref{eqn:free-mu-lower}, i.e., $C(\mu)< \kappa=A(\mu)< B(\mu)$ implies
\[\frac{1}{12\pi^2} \mu^{3/2}<\kappa < \frac{1}{8\pi^{3/2}}  e^{\beta \mu} \beta^{- 3 /2}.\]
Therefore,
\[\frac{1}{\beta}\log\big(8\pi^{3/2}\kappa \beta^{3/2}\big) < \mu < (12\pi^2\kappa)^{2/3}\]
proves \eqref{eqn:mu-bounds-free-case} with $c_1=\dfrac{1}{8\pi^{3/2}}$ and $c_2=\dfrac{1}{12\pi^2}$.
\end{proof}

\section{Fixed point argument: proof of Theorem \ref{thm:main-result}} \label{sec:main-proof}
\begin{proof}[Proof of Theorem \ref{thm:main-result}]
By Theorem \ref{thm:trans-invar-REHF} and  the assumption \eqref{eqn:trans-invar-ass}  on $\kappa_0$, there exists a solution $(\phi=0, \mu)$ with $\mu > 0$ to equation \eqref{eqn:REHF} with 
\begin{equation}
    \kappa = \kappa_0 := \frac{1}{|Q|}\E\int_Q \kappa. \label{eqn:kappa-0-def}
\end{equation}
That is, $\kappa_0=\den f_{\rm FD}(\beta(-\Delta - \mu))$. 
For this $\mu$ fixed, we look for a solution $(\phi, \mu)$ to the full equation \eqref{eqn:REHF} perturbatively. In particular, we linearize \eqref{eqn:REHF} at $\phi = 0$. Let
\begin{equation}
	M := d_\phi \den f_{\rm FD}(\beta(-\Delta - \phi - \mu)) \mid_{\phi = 0} \label{eqn:linop}
\end{equation}
and
\begin{equation}
	N(\phi) :=  -\Big(\den f_{\rm FD}(\beta(-\Delta - \phi - \mu)) - \den f_{\rm FD} (\beta(-\Delta - \mu)) - M\phi\Big)  . \label{eqn:nonlinDef}
\end{equation}
Then we can rewrite equation \eqref{eqn:REHF} as
\begin{equation}
    (-\frac{1}{4\pi}\Delta + M)\phi = \kappa' + N(\phi). \label{eqn:LS-standard-form}
\end{equation}
Let us further denote $L$ to be the linear operator
\begin{equation}
    L := -\frac{1}{4\pi}\Delta + M.
\end{equation}
We have the following estimates for the linear operator $L$ and the nonlinear operator $N$, whose proofs are delayed to sections \ref{sec:lin-anal} and \ref{sec:nonlin-anal} below for a self-contained treatment. 

\begin{theorem} \label{thm:LLowerBound-final}
Let $L$ be defined in \eqref{eqn:linop} on $L^\infty_w L^2_x$. Let $f \in L^\infty_w L^2_x$. Then
\begin{equation}
    \|Lf\|_{L^\infty_w L^2_x} \geq  \|(m_Q-\frac{1}{8\pi}\Delta)f\|_{L^\infty_w L^2_x} 
\end{equation}
where
\begin{equation}
    m_Q =c_\ast \min(\mu, \sqrt{\mu})  - C_{\beta,\mu}\,  \ell(Q)^{-1} \label{eqn:m-Q-def}
\end{equation}
for some universal constant $c_\ast>0$ and some constant $C_{\beta,\mu} > 0$  depending only on $\beta,\mu$, and $\ell(Q)$ is the diameter of $Q$. 
\end{theorem}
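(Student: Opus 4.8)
The plan is to compute $M$ explicitly via the holomorphic functional calculus and show it is a nonnegative operator whose symbol, on the slice $\omega$-by-$\omega$, is bounded below by a positive constant (minus a boundary correction coming from the finite-volume structure). First I would write $f_{\rm FD}(\beta(-\Delta-\phi-\mu)) = \frac{1}{2\pi i}\oint_\Gamma f_{\rm FD}(\beta(z-\mu))(z+\Delta+\phi)^{-1}\,dz$ for a suitable contour $\Gamma$ encircling the spectrum, and differentiate in $\phi$ at $\phi=0$ using the resolvent identity $d_\phi(z+\Delta+\phi)^{-1}|_{\phi=0} = -(z+\Delta)^{-1}\phi(z+\Delta)^{-1}$. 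This gives, formally,
\begin{equation}
M\phi = \den\Big( \frac{1}{2\pi i}\oint_\Gamma f_{\rm FD}(\beta(z-\mu)) (z+\Delta)^{-1}\phi(z+\Delta)^{-1}\,dz\Big) = \den\big( g(-\Delta)\,\phi\,g(-\Delta)\big)\Big|_{\text{symmetrized}},
\end{equation}
and after collapsing the contour integral one recognizes $M$ as a Fourier multiplier in the translation-invariant (jellium) picture: on $L^2(\R^3)$ its kernel in momentum space is the divided difference $\frac{f_{\rm FD}(\beta(p^2-\mu)) - f_{\rm FD}(\beta(q^2-\mu))}{q^2 - p^2}$, which is manifestly $\ge 0$ since $f_{\rm FD}$ is decreasing. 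The point of Theorem \ref{thm:LLowerBound-final} is to extract a \emph{quantitative} lower bound $m_Q$ for this operator, uniformly over the stationarity fiber, from the value of this divided difference near the diagonal $p=q$, where it equals $-\beta f_{\rm FD}'(\beta(p^2-\mu)) = \beta\, \frac{e^{\beta(p^2-\mu)}}{(1+e^{\beta(p^2-\mu)})^2}$; the infimum of this over the relevant low-momentum régime is what produces the factor $c_\ast\min(\mu,\sqrt\mu)$.

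Next I would make the finite-volume reduction honest. Since we work in $L^\infty_\om L^2_x$ with stationary $\phi$, I would use the per-volume trace / density formalism of Appendix \ref{app:per-vol-set-up}: restrict to a large box (a union of $\lat$-translates of $Q$) with appropriate boundary conditions, represent $M$ there as a genuine bounded operator, and estimate $\lan f, Mf\ran$ from below by $m_Q\|f\|^2$ where the correction $-C_{\beta,\mu}\ell(Q)^{-1}$ absorbs the difference between the true (whole-space) multiplier and its finite-volume surrogate — this is a standard resolvent-locality / exponential-decay-of-the-heat-or-resolvent-kernel estimate, using that $f_{\rm FD}(\beta(z-\mu))$ is analytic in a strip so $g(-\Delta)$ has an exponentially localized kernel with rate controlled by $\beta$ and $\mu$ (hence the constant depends on $\beta,\mu$ only). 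Combining $M \ge m_Q$ as a quadratic form with the trivial bound $-\frac{1}{4\pi}\Delta \ge -\frac{1}{8\pi}\Delta + (-\frac{1}{8\pi}\Delta)$ and keeping half the Laplacian for ellipticity gives
\begin{equation}
L = -\tfrac{1}{4\pi}\Delta + M \ \ge\ -\tfrac{1}{8\pi}\Delta + m_Q \ =\ m_Q - \tfrac{1}{8\pi}\Delta
\end{equation}
as operators, and then $\|Lf\| \ge \|(m_Q - \frac{1}{8\pi}\Delta)f\|$ follows because both operators are positive Fourier multipliers (on each fiber) and $L$'s symbol dominates that of $m_Q - \frac{1}{8\pi}\Delta$ pointwise, so the claimed norm inequality is immediate from Plancherel fiberwise and then integration in $\om$.

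The main obstacle I expect is not the algebra of the divided difference but controlling $M$ \emph{as an operator on the stationary space} $L^\infty_\om L^2_x$ rather than on $L^2(\R^3)$: one must show that $M$ descends to this space, that the quadratic-form lower bound $\lan f, Mf\ran \ge m_Q \|f\|^2$ holds fiberwise with a constant independent of $\om$, and that the passage from the whole-space multiplier to the finite-volume operator genuinely costs only $O(\ell(Q)^{-1})$ — i.e. making the heuristic "$M$ is local up to exponentially small tails" precise in the per-volume trace framework, including checking that $\den$ of the relevant operator products is well-defined and that the symmetrization producing a self-adjoint $M$ is legitimate. A secondary technical point is pinning down the universal constant $c_\ast$ and the $\min(\mu,\sqrt\mu)$ split: for small $\mu$ the diagonal weight $-\beta f_{\rm FD}'$ near $p^2 \approx \mu$ scales like $\beta$ while the relevant momentum window has size $\sim\sqrt\mu$, and for large $\mu$ a different balance applies; getting a clean single bound $c_\ast\min(\mu,\sqrt\mu)$ requires a short case analysis that I would carry out by minimizing $\beta \frac{e^{x}}{(1+e^x)^2}$ over the admissible range of $x = \beta(p^2-\mu)$ together with the measure-of-momentum-shell factor.
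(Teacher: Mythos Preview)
Your overall strategy matches the paper's: compute $M$ explicitly as a translation-invariant operator, prove a lower bound $L \ge c_0(-\Delta + m_*)$ on $L^2(\R^3)$ with $m_* = \min(\mu,\sqrt\mu)$, then transfer this to $L^\infty_\om L^2_x$ at the cost of an $\ell(Q)^{-1}$ correction. Two minor points on the first two steps: (i) $M$ is a convolution operator, so its ``kernel in momentum space'' is a symbol in one variable, not a divided difference in $(p,q)$; the divided difference appears only \emph{inside} an integral over a loop momentum when you compute that symbol, and the paper instead evaluates the symbol as an explicit logarithmic integral (Lemma~\ref{lem:L0-explicit-compute}); (ii) your route to $c_*\min(\mu,\sqrt\mu)$ via $-\beta f_{\rm FD}'$ would produce a $\beta$-dependent constant, whereas the paper gets a universal $c_*$ by using $f_{\rm FD}(\beta(t-\mu)) \ge \tfrac12\chi_{[0,\mu]}(t)$ directly inside the logarithmic integral and splitting at $p^2/4 \lessgtr \mu$.

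The genuine gap is in the transfer step. Stationary functions are not in $L^2(\R^3)$, there is no ``Plancherel fiberwise'' on $L^\infty_\om L^2_x$, and a quadratic-form inequality $\langle f, Lf\rangle_{L^2(Q)} \ge \langle f, (m_Q - \tfrac{1}{8\pi}\Delta)f\rangle_{L^2(Q)}$ --- even if you could establish it --- does not yield the norm inequality, because the $L^2(Q)$ pairing sees only the restriction of $Lf$ to $Q$ while $L$ is nonlocal and acts on the full stationary extension of $f$. The paper's key device, which you are missing, is a \emph{duality} argument: write $\|Lf\|_{L^2(Q)} = \sup_{\psi \in C_c^2(Q)} \langle \psi, Lf\rangle_{L^2(Q)}/\|\psi\|_{L^2(Q)}$ and choose the test function $\psi = \chi\,(-\tfrac{1}{4\pi}\Delta + M)^{-1}(f|_Q)$, where the inverse is the honest $L^2(\R^3)$ inverse (available by the $\R^3$ lower bound) and $\chi$ is a smooth cutoff supported in $Q$. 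Because $\psi$ is compactly supported, one may move $L$ onto $\psi$ legitimately, obtaining $\langle \psi, Lf\rangle_{L^2(Q)} = \langle L\psi, f\rangle_{L^2(\R^3)}$; then $L\psi = \chi\,(f|_Q) + [L,\chi]\,(-\tfrac{1}{4\pi}\Delta+M)^{-1}(f|_Q)$, producing a main term $\|\chi^{1/2}f\|_{L^2(Q)}^2$ and commutator errors. The $\ell(Q)^{-1}$ correction arises exactly from $[-\Delta,\chi]$ and $[M,\chi]$, using $\|\nabla^m\chi\|_\infty \lesssim \ell(Q)^{-m}$; the $[M,\chi]$ term is handled by writing $M$ as convolution with $\oint g_z^2$ and exploiting the integrability of $g_z$ and its derivatives. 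So the locality you invoke does enter, but as a commutator-with-cutoff estimate rather than a direct comparison of whole-space and finite-volume multipliers.
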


\begin{theorem} \label{thm:nonlin}
Let $N$ be defined in \eqref{eqn:nonlinDef} and let $\phi_1, \phi_2 \in L^\infty_w H^2_x$. There exists a constant $C_{\beta,\mu}$ depending only on $\beta$ and $\mu$ such that if  $\|\phi_i \|_{L^\infty_\om H_x^2} \leq \frac{1}{10}C_{\beta,\mu}^{-1}$,  then
\begin{equation}
	\|N(\phi_1) - N(\phi_2)\|_{L^\infty_w L^2_x} \leq C_{\beta,\mu}(\|\phi_1\|_{L^\infty_w H^2_x} + \|\phi_2\|_{L^\infty_w H^2_x}) \|\phi_1 - \phi_2\|_{L^\infty_w H^2_x}. \label{eqn:nonlin}
\end{equation}
\end{theorem}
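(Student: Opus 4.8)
The plan is to represent $N(\phi)$ through the holomorphic functional calculus as a Cauchy contour integral of resolvents, to identify it with the second-order remainder of a resolvent expansion, and then to reduce \eqref{eqn:nonlin} to density (``trace per unit volume'') estimates for products of resolvents and bounded multiplication operators. Throughout, $\beta,\mu$ are fixed as in the statement; abbreviate $H_0=-\Delta$, $H_\psi=-\Delta-\psi$, $R_0(z)=(z-H_0)^{-1}$ and $R_\psi(z)=(z-H_\psi)^{-1}$. Since $H^2(Q)\hookrightarrow L^\infty(Q)$ in dimension three, for a.e.\ $\om$ the function $\phi_i(\cdot,\om)$ acts as a bounded multiplication operator with $\|\phi_i(\cdot,\om)\|_{L^\infty}\lesssim\|\phi_i\|_{L^\infty_\om H^2_x}$; hence $H_{\phi_i}$ is self-adjoint and, under the smallness hypothesis, $\spec(H_{\phi_i})\subset[-\delta,\infty)$ for a fixed small $\delta>0$. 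Fix once and for all a contour $\Gamma$ that encircles $[-\delta,\infty)$, runs out to $\re z=+\infty$ (where $|f_{\rm FD}(\beta(z-\mu))|$ decays exponentially), and stays in the strip $|\im z|<\pi\beta^{-1}$, thereby avoiding the Matsubara poles $\mu+i\pi\beta^{-1}(2k+1)$, $k\in\Z$, of $z\mapsto f_{\rm FD}(\beta(z-\mu))$. Then $f_{\rm FD}(\beta(H_\phi-\mu))=\frac{1}{2\pi i}\oint_\Gamma f_{\rm FD}(\beta(z-\mu))R_\phi(z)\,dz$, and iterating the resolvent identity $R_\phi=R_0-R_0\phi R_\phi$ twice and matching with \eqref{eqn:linop}--\eqref{eqn:nonlinDef} gives the remainder representation
\[
	N(\phi)=-\den\,\frac{1}{2\pi i}\oint_\Gamma f_{\rm FD}(\beta(z-\mu))\,R_0(z)\,\phi\,R_0(z)\,\phi\,R_\phi(z)\,dz .
\]

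Subtracting the representations for $\phi_1$ and $\phi_2$ and telescoping, with $R_i:=R_{\phi_i}(z)$ and using $R_1-R_2=-R_1(\phi_1-\phi_2)R_2$, one finds
\begin{align*}
	R_0\phi_1R_0\phi_1R_1-R_0\phi_2R_0\phi_2R_2
	&=R_0(\phi_1-\phi_2)R_0\phi_1R_1\\
	&\quad+R_0\phi_2R_0(\phi_1-\phi_2)R_1\\
	&\quad-R_0\phi_2R_0\phi_2R_1(\phi_1-\phi_2)R_2 ,
\end{align*}
so that $N(\phi_1)-N(\phi_2)$ is a sum of three contour integrals, each integrand a product of two to four resolvents interlaced with one factor $\phi_1-\phi_2$ and one or two further factors among $\phi_1,\phi_2$. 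Hence \eqref{eqn:nonlin} reduces to showing, for a.e.\ $\om$ and every $z\in\Gamma$, that the $L^2(Q)$-norm of the density of each such product is at most $g(z)\,\|(\phi_1-\phi_2)(\cdot,\om)\|_{L^\infty}$ times one or two factors $\|\phi_a(\cdot,\om)\|_{L^\infty}$, for some $g$ with $\int_\Gamma|f_{\rm FD}(\beta(z-\mu))|\,g(z)\,|dz|<\infty$; one then bounds the $L^\infty$ norms by the $H^2$ norms, absorbs the second factor in the cubic term using the smallness hypothesis, takes the essential supremum over $\om$, and uses that stationarity makes $\|\cdot\|_{L^2(Q)}$ independent of the location of $Q$.

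For the density estimates I would argue as in \cite{CLL} and Appendix \ref{app:per-vol-set-up}. Writing a given product in the form $R_0(z)B$ and denoting by $G_z$ the (exponentially decaying, by the choice of $\Gamma$) Yukawa kernel of $R_0(z)$, one has $\den(R_0(z)B)(x)=\int G_z(x-y)\,B(y,x)\,dy$, so Cauchy--Schwarz gives $|\den(R_0(z)B)(x)|\le\|G_z\|_{L^2(\R^3)}\,\|B(\cdot,x)\|_{L^2(\R^3)}$ and hence $\|\den(R_0(z)B)\|_{L^2(Q)}\lesssim\|G_z\|_{L^2}\,\sup_{x\in Q}\|B(\cdot,x)\|_{L^2}$. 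Writing $B=CR$ with $R$ the rightmost resolvent and $C$ the remaining factors, $\|B(\cdot,x)\|_{L^2}\le\|C\|_{\ope}\,\|R(\cdot,x)\|_{L^2}$, where $\|R(\cdot,x)\|_{L^2}\lesssim\|G_z\|_{L^2}$ by a Neumann series (valid on $\Gamma$ once $\|R_0(z)\phi_i\|_{\ope}<1$, which holds under the smallness hypothesis) and $\|C\|_{\ope}$ is a product of the $L^\infty$ norms of the multiplication operators $\phi_1-\phi_2,\phi_i$ and of operator norms $\lesssim\dist(z,[-\delta,\infty))^{-1}$ of the inner resolvents. This yields $g(z)$ as a fixed product of $\|G_z\|_{L^2}^2\sim(\im\sqrt z)^{-1}$ and a fixed negative power of $\dist(z,[-\delta,\infty))$; since $|f_{\rm FD}(\beta(z-\mu))|$ decays exponentially as $\re z\to+\infty$ while $g$ grows only polynomially there and both remain bounded on the finite part of $\Gamma$, the contour integral converges and produces the constant $C_{\beta,\mu}$. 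The quantitative threshold $\tfrac1{10}C_{\beta,\mu}^{-1}$ is consumed in fixing $\delta$ (so that $\spec(H_{\phi_i})\subset[-\delta,\infty)$) and in the Neumann-series estimates on $\Gamma$.

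I expect the main obstacle to be carrying out these density estimates uniformly, and with sufficiently explicit $z$-dependence, in the stationary (non-decaying, random) setting --- i.e., the ergodic/periodic analogue of the usual Schatten-class (Kato--Seiler--Simon) machinery, in which honest trace-class bounds are replaced by $L^2(Q)$ kernel bounds --- and, relatedly, tracking how $C_{\beta,\mu}$ depends on $\beta$ and $\mu$: as $\beta\to\infty$ the contour must be squeezed into the strip $|\im z|<\pi\beta^{-1}$, hence toward $\spec(H_{\phi_i})$ near $z=\mu$, so that $\dist(z,[-\delta,\infty))^{-1}$ and $(\im\sqrt z)^{-1}$ blow up. This degeneration at zero temperature is exactly the phenomenon flagged in the remark following Theorem \ref{thm:main-result}.
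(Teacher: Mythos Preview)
Your argument is correct and would prove the theorem, but it is organized differently from the paper's proof. The paper expands $R_\phi$ fully into a Neumann series, writing $N(\phi)=\sum_{n\ge 2}N_n(\phi)$ with $N_n(\phi)=(-1)^{n-1}\oint\den\,R_0(\phi R_0)^n$, and estimates each $N_n(\phi_1)-N_n(\phi_2)$ by working directly with the explicit free-resolvent kernel $g_z$: after pulling out $n-1$ factors of $\|\phi_\#\|_{L^\infty}$, the remaining piece is a convolution $k_z*|\phi_1-\phi_2|$ with $k_z=|g_z|\cdot(|g_z|^{*n})$, and the $L^2(Q)$ norm of this convolution against a merely \emph{stationary} (non-decaying) function is handled by a lattice-cell decomposition that reduces to $\|k_z\|_{L^1(\R^3)}\sup_\ell\|\phi_1-\phi_2\|_{L^2(Q+\ell)}$. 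Your route instead keeps the perturbed resolvent $R_{\phi_i}$ intact, bounds the density pointwise by Cauchy--Schwarz against $\|G_z\|_{L^2}$, and controls $\|B(\cdot,x)\|_{L^2}$ by operator norms of the inner factors together with a Neumann-series bound on $\|R_{\phi_i}(\cdot,x)\|_{L^2}$. This sidesteps the lattice-sum argument entirely (all $\phi$-factors, including $\phi_1-\phi_2$, enter only through their $L^\infty$ norms, which you then bound by $H^2$ via Sobolev), at the price of a slightly less explicit constant and of passing through a pointwise $L^\infty(Q)$ bound on the density rather than the paper's sharper $L^2(Q)$ estimate. Both approaches yield the same integrability along $\Gamma$: the relevant weight is $|f_{\rm FD}(\beta(z-\mu))|$ times a finite product of $\|g_z\|_{L^2}^2$, $\|g_z\|_{L^1}$, and $\dist(z,\spec)^{-1}$, all of which grow only polynomially as $\Re z\to+\infty$ while $f_{\rm FD}$ decays exponentially. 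Your remark about the $\beta\to\infty$ degeneration is accurate and matches the paper's.
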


 In order to utilize Theorem \ref{thm:LLowerBound-final} and ensure a positive lower bound for $L$, we make the following observation. If $\kappa$ is $\lat$ stationary, it is also $(n\lat)$ stationary for $0 < n \in \Z$. 
  Given $Q$, if its diameter $\ell(Q)$ is not large enough to ensure that $m_Q$ in \eqref{eqn:m-Q-def} is positive, we may then work on an enlarged fundamental domain $nQ$ first for some large positive integer $n$ to be specified. By the definition of $\kappa_0$ in \eqref{eqn:kappa-0-def} and the stationary of $\kappa$, $\kappa_0$ remains the same either on $Q$ or $nQ$. We obtain the same $\mu$ (depending only on $\kappa_0$) on $L^\infty_\om  L^2_x(nQ)$ by Theorem \ref{thm:trans-invar-REHF}. We then apply Theorem \ref{thm:LLowerBound-final} to $f\in L^\infty_\om  L^2_x(nQ)$, where the associated lower bound is
 \begin{equation}
     m_{nQ} =c_\ast \min(\mu, \sqrt{\mu})  - C_{\beta,\mu}\,  \ell(nQ)^{-1} .
 \end{equation}
 We then choose $n$ so large (depending only on $\mu,\beta$) so that $m_{nQ}>0$. Theorem \ref{thm:nonlin} holds on $L^\infty_\om H_x^2(nQ)$ with the same constant depending only $\beta,\mu$. To apply the fix point argument in the next step, we need $\|\kappa'\|_{L^\infty_\om L_x^2(nQ)}$ to be small. This is achieved by requiring  $\|\kappa'\|_{L^\infty_\om L_x^2(Q)}$ to be smaller once $n$ is fixed since $\|\kappa'\|_{L^\infty_\om L_x^2(nQ)}=n^{3/2}\|\kappa'\|_{L^\infty_\om L_x^2(Q)}.$ In conclusion, we may assume from the very beginning that the fundamental domain of $\lat$ has a sufficiently large diameter. This ensures that \eqref{eqn:m-Q-def} is positive.

On the other hand, we claim that an $(n\lat)$ stationary solution associated to a $\lat$ stationary $\kappa$ is in fact $\lat$ stationary, provided such solution in unique. Suppose that $(\varphi(x,\om), \mu) \in L^\infty_\om H^2_x \times \R$ is any $(n\lat)$ stationary unique solution of \eqref{eqn:REHF}. Since $\kappa$ is $\lat$ stationary, for $x\in \R^3$ and $\ell \in \lat$, 
\begin{equation}
     -\frac{1}{4\pi}\Delta \phi(x-\ell,\tau_\ell\om) = \kappa(x, \om) - \den f_{\rm FD}(\beta(-\Delta -\phi(x-\ell,\tau_\ell\om) - \mu)). 
\end{equation}
This shows that $\phi(x-\ell,\tau_\ell\om)$ is also an $(n\lat)$ stationary solution of \eqref{eqn:REHF}. By uniqueness, we see that
\begin{align}
    \phi(x-\ell,\tau_\ell\om)=\phi(x,\om).
\end{align}
Hence, $\phi$ is also $\lat$ stationary. Consequently, it suffices for us to prove Theorem \ref{thm:main-result} and utilize Theorems \ref{thm:LLowerBound-final} under the assumption that $Q$ is sufficiently large. We will make this assumption for the rest of the proof of Theorem \ref{thm:main-result}. At the same time, the uniqueness assumption is established via the fixed point theorem below.

By Theorem \ref{thm:LLowerBound-final}, if $f \in L^\infty_\om L^2_x$, then
\begin{align}\label{eq:mq0}
    \|Lf\|_{L^\infty_\om L^2_x} \gtrsim \|(m_Q-\Delta) f\|_{L^\infty_\om L^2_x},
\end{align}
where $m_Q$ is given in \eqref{eqn:m-Q-def} and 
\begin{align}
    m_Q > 0.
\end{align}
We can assume $m_Q<1/5$ otherwise one can always pick smaller lower bounds so that \eqref{eq:mq0} holds. 
Hence, $L$ is invertible and bounded from below by $m_Q$. Thus, we can write \eqref{eqn:LS-standard-form} as
\begin{equation}
	\phi = L^{-1} \kappa' + L^{-1} N(\phi). \label{eqn:fixedPtPS}
\end{equation}
To apply the fixed point theorem, we work on the ball
\begin{equation}
	B := \{ \phi \in   L^\infty_w H^2_x : \|\phi\|_{ L^\infty_w H^2_x} < \epsilon \} \label{eqn:domain-for-research-for-phi}
\end{equation}
where  $\epsilon \leq \frac{1}{10}C_{\beta,\mu}^{-1}$  is to be determined. If the right hand side of \eqref{eqn:fixedPtPS} were to map $B$ into $B$, then we require
\begin{equation}
	 \|L^{-1} (\kappa' + N(\phi))\|_{L^\infty_w  H^2_x}  \leq \epsilon. \label{eqn:BintoB} 
\end{equation}
 By the choice of $\epsilon \leq \frac{1}{10}C_{\beta,\mu}^{-1}$  and by the nonlinear estimates in Theorem \ref{thm:nonlin}, we see that condition \eqref{eqn:BintoB} is satisfied if
\begin{equation}
	 m_Q^{-1} \left(\| \kappa'\|_{L^\infty_w L^2_x} + C_{\beta,\mu}\|\phi\|_{L^\infty_w H^2_x}^2\, \right)  
	\leq   m_Q^{-1}\, \left (\|\kappa'\|_{L^\infty_w L^2_x} + C_{\beta,\mu}\epsilon^2\, \right)  
	\leq  \epsilon \label{eqn:BBConstraint}
\end{equation}
where $C_{\beta,\mu}$ is given in \eqref{eqn:nonlin}. Moreover, in order that $L^{-1} N$ is a contraction, we require
\begin{equation}
	2m_Q^{-1}\, C_{\beta,\mu} \epsilon < 1 . \label{eqn:contractionConstraint} 
\end{equation}
Consequently, we choose  
\begin{equation}
    \epsilon = \min\left( 2\, m_Q^{-1}\, \|\kappa'\|_{L^\infty_w L^2_x}, \frac{1}{10}C_{\beta,\mu}^{-1} \right) .  
\end{equation}
We see that this choice of $\epsilon$ satisfies \eqref{eqn:contractionConstraint}. If, in addition, $\kappa'$ is sufficiently small:
\begin{equation}
    \|\kappa'\|_{L^\infty_w L^2_x} < \frac{m_Q^2}{4C_{\beta,\mu}}, \label{eq:kappa-small}
\end{equation}
then together with the assumption $m_Q<1/5$, we have
\[2\, m_Q^{-1}\, \|\kappa'\|_{L^\infty_w L^2_x}\le 2\, m_Q^{-1}\,\frac{m_Q^2}{4C_{\beta,\mu}}  \le  \frac{m_Q}{2C_{\beta,\mu}}<\frac{1}{10C_{\beta,\mu}}.\]
Therefore, $ \epsilon =   2\, m_Q^{-1}\, \|\kappa'\|_{L^\infty_w L^2_x}$ and
\[m_Q^{-1}\, \left (\|\kappa'\|_{L^\infty_w L^2_x} + C_{\beta,\mu}\epsilon^2\, \right)=m_Q^{-1}\|\kappa'\|_{L^\infty_w L^2_x}+ (m_Q^{-1} C_{\beta,\mu}\epsilon)\epsilon \le \epsilon/2+\epsilon/2, \]
which verifies \eqref{eqn:BBConstraint}. Theorem \ref{thm:main-result} is now proved by the fixed point theorem, where the solution $\varphi$ of \eqref{eqn:fixedPtPS} is unique on $B$ (see \eqref{eqn:domain-for-research-for-phi}) and satisfies
\begin{equation}
     \|\varphi\|_{L^\infty_w H^2_x}<\eps \leq 2m_Q^{-1} \|\kappa'\|_{L^\infty_w L^2_x} = C_{\beta, \mu, \kappa_0}\|\kappa'\|_{L^\infty_w L^2_x} \label{eq:small-2}
\end{equation}
for some constant $C_{\beta,\mu, \kappa_0}$.  We remark that $C_{\beta,\mu, \kappa_0}$ receives its dependence on $\kappa_0$ through $\mu$ (see Theorem \ref{thm:trans-invar-REHF}).

\end{proof}

\section{Linear analysis} \label{sec:lin-anal}
 The main result of this section is Theorem \ref{thm:LLowerBound-final}, which concerns the lower bound of the operator $L$ on $L^\infty_w L^2_x$ (see \eqref{eqn:linop}). Since this analysis is the core of the paper, we detail its proofs in piece meals. First, we compute an explicit form for $L$ in Subsection \ref{subsec:explicit-form-of-L}. Then we prove a lower bound for $L$ on $L^2(\R^3)$ in Subsection \ref{subsec:lower-on-L2R3}. Finally, building on these two results, we will prove Theorem \ref{thm:LLowerBound-final} in the last Subsection \ref{subsec:pf:thm:LLowerBound-final}.

\subsection{Explicit form of the linearization} \label{subsec:explicit-form-of-L}

We first look for an integral representation of $\den f_{\rm FD}(\beta(-\Delta - \phi -\mu))$ (see the right hand side of \eqref{eqn:REHF}).  Let $\varphi \in H^2(\Omega) \subset L^\infty(\Omega)$ be such that $\|\varphi\|_\infty \lesssim \|\varphi\|_{H^2(\Om)}$ is sufficiently small. Thus, the spectrum of $-\Delta-\varphi$ lays on 
the real axis and bounded from $-\infty$.  Moreover, we note that $f_{\rm FD}(\beta(z-\mu))$ is meromorphic on $\C$ with poles $\mu+i\pi\beta^{-1}(2\Z+1)$. It follows by Cauchy's theorem that
\begin{equation}
    \den f_{\rm FD}(\beta(-\Delta - \phi -\mu)) = \frac{1}{2\pi i} \den \int_{\Gamma} f_{\rm FD}(\beta(z-\mu)) (z-(-\Delta-\phi))^{-1}, \label{eqn:cauchy-int-rep}
\end{equation}
where the contour $\Gamma$ is given in Figure \ref{fig:cauchy-int-contour-general}. In particular, $\Gamma$ is chosen to be at most $\frac{3}{4}\beta^{-1}$ distance away from the real line and  contains the spectrum of $-\Delta-\varphi$.
\begin{figure}[ht]  
\begin{center}
\begin{tikzpicture}
    \path [draw, help lines, opacity=.5]  (-6,-5) grid (6,5);
    
    \foreach \i in {1,...,5} \draw (\i,2.5pt) -- +(0,-5pt) node [anchor=north, font=\small] {} (-\i,2.5pt) -- +(0,-5pt) node [anchor=north, font=\small] {} (-2.92,\i) -- +(-5pt,0) node [anchor=east, font=\small] {} (-2.92,-\i) -- +(-5pt,0) node [anchor=east, font=\small] {};
    
    \draw [->] (-6,0) -- (6,0) node [anchor=south] {$x$};
    \draw [->] (-3,-5) -- (-3,5) node [anchor=west] {$y$};
    
    \draw [<->] (-1,0.1) -- (-1,1.9) node [below=30, right] {$\frac{1}{2}\beta^{-1}$};
    
    \draw [<->] (-4.9,-0.2) -- (-3.1,-0.2) node [below=10, right=-40] {$\frac{1}{2}\beta^{-1}$};

	\draw [-, ultra thick, color=black] (-2, 0) node {} -- (6, 0) node {};
	\node[color=black] at (4,-0.5) {Spectrum};
    
	\draw [->, ultra thick, color=blue, densely dashed] (6,2) node {} -- (-5,2) node {};
    \draw [->, ultra thick, color=blue, densely dashed] (-5,2) node {} -- (-5,-2) node {};
    \draw [->, ultra thick, color=blue, densely dashed] (-5,-2) node {} -- (6,-2) node {};

	\node[color=blue] at (-4, 0.5) {Contour};
  \end{tikzpicture}
\end{center}
\caption{We identify the complex plane $\C$ with $\R^2$ via $z = x+iy$ for $(x,y) \in \R^2$ in the diagram above. The contour $\Gamma$ is denoted by the blue dashed line, extending to positive real infinity. The spectrum of $-\Delta - \varphi$ is contained in the solid black line. \label{fig:cauchy-int-contour-general}}
\end{figure}
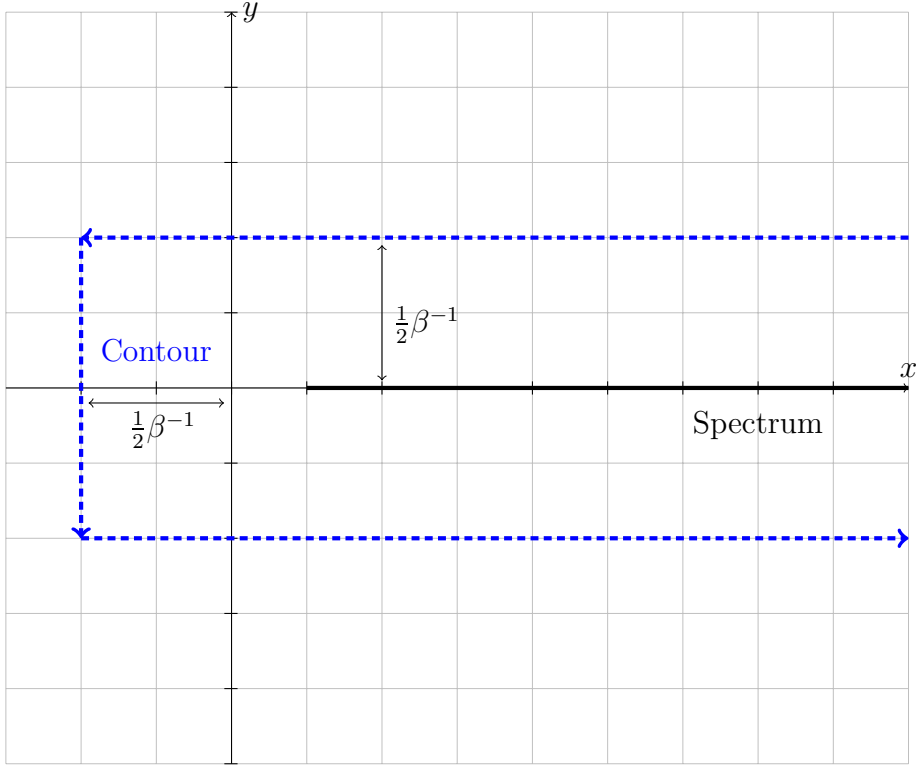
 Since we will be using expressions similar to \eqref{eqn:cauchy-int-rep} repeatedly, we will denote
\begin{equation}
  \oint :=  \frac{1}{2\pi i} \int_\Gamma dz f_{\rm FD}(\beta(z-\mu)) \text{ and }\left| \oint \right| := \frac{1}{2\pi} \int_\Gamma dz |f_{\rm FD}(\beta(z-\mu))|  \label{eqn:oint-def}
\end{equation}
through out the paper.

Now, we are ready to compute the linearization of \eqref{eqn:REHF}. By the resolvent identity
\begin{equation}
	(z-A)^{-1} - (z-B)^{-1} = (z-A)^{-1}(A-B)(z-B)^{-1} \label{eqn:resolventID}
\end{equation}
for any operators $A,B$ on $L^2(\R^3)$, the linear operator (linearized at $\phi = 0$) of the REHF equation \eqref{eqn:REHF} can be seen to be
\begin{align}
	& L = -\frac{1}{4\pi}\Delta + M \text{ where $M$ is defined in \eqref{eqn:linop} and} \notag \\
	&  Mf =  -\den \, \frac{1}{2\pi i}\int_\Gamma dz f_{\rm FD}(\beta(z-\mu))(z+\Delta)^{-1} f (z+\Delta)^{-1}  .  \label{eqn:Mdef}
\end{align}
Notice that we state \eqref{eqn:Mdef} for $f\in L^2_x$, it is also valid for $f\in L_\omega^\infty L^2_x$ with any fixed realization $\omega$. 

 One can verify that $M\ge 0$ is non-negative. The main result of this section is Lemma \ref{lem:L0-explicit-compute} which yields an explicit formula for $M$.

\begin{lemma} \label{lem:L0-explicit-compute}
Assume that $\mu > 0$, then on $L^2(\R^3)$,
 
\begin{equation}
	M = \frac{1}{8\pi^2} \, \frac{1}{ |\nabla|} \int_{0}^\infty \ln \left( \left| \frac{\sqrt{4t} +   |\nabla|}{\sqrt{4t} -  |\nabla|} \right| \right) f_{\rm FD}(\beta(t-\mu)) dt. \label{eqn:MexplicitForm}
\end{equation}
 
\end{lemma}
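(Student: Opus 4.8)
The plan is to exploit translation invariance. In \eqref{eqn:Mdef} the operator $M$ is built from $-\Delta$, multiplication by $f$, and restriction to the diagonal, all of which commute with translations; hence $M$ is a convolution operator, i.e. a Fourier multiplier, and the whole task reduces to identifying its symbol. First I would record the free resolvent kernel: for $z\notin[0,\infty)$ and the principal branch with $\re\sqrt{-z}>0$,
$$(z+\Delta)^{-1}(x,y)=-\,\frac{e^{-\sqrt{-z}\,|x-y|}}{4\pi|x-y|}.$$
Since this kernel is even, the diagonal of $(z+\Delta)^{-1}f(z+\Delta)^{-1}$ is $\int_{\R^3}\big((z+\Delta)^{-1}(x,u)\big)^2 f(u)\,du=(g_z*f)(x)$, where $g_z(r):=e^{-2\sqrt{-z}\,r}/(16\pi^2 r^2)$ with $r=|r|$. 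Substituting into \eqref{eqn:Mdef} gives $Mf=m*f$ with $m=-\oint g_z$; equivalently, $M$ has Fourier symbol $\wt m(k)=-\oint\widehat{g_z}(k)$ with $\widehat{g_z}(k)=\int_{\R^3}g_z(r)e^{-ir\cdot k}\,dr$.

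The second step is to evaluate $\widehat{g_z}$ explicitly. Using the radial Fourier formula $\int_{\R^3}h(|r|)e^{-ir\cdot k}\,dr=\frac{4\pi}{|k|}\int_0^\infty h(s)\sin(|k|s)\,s\,ds$ together with the elementary identity $\int_0^\infty\frac{e^{-as}\sin(bs)}{s}\,ds=\arctan(b/a)$ (valid here since $\re(2\sqrt{-z})>0$), one gets
$$\widehat{g_z}(k)=\frac{1}{4\pi|k|}\arctan\!\Big(\frac{|k|}{2\sqrt{-z}}\Big),
\qquad\text{hence}\qquad
\wt m(k)=-\frac{1}{2\pi i}\int_\Gamma f_{\rm FD}(\beta(z-\mu))\,\frac{1}{4\pi|k|}\arctan\!\Big(\frac{|k|}{2\sqrt{-z}}\Big)\,dz.$$

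The core of the argument is then a contour deformation. The $z$-integrand is holomorphic off the half-line $[0,\infty)$ — which is simultaneously $\spec(-\Delta)$ and the branch cut of $\sqrt{-z}$ — it decays as $\re z\to+\infty$ because $f_{\rm FD}(\beta(z-\mu))$ does, it decays as $\re z\to-\infty$ because $\arctan(|k|/2\sqrt{-z})\to0$, and since $\Gamma$ stays within $\frac34\beta^{-1}<\pi\beta^{-1}$ of $\R$ no pole of $f_{\rm FD}(\beta(z-\mu))$ is crossed. Collapsing $\Gamma$ onto $[0,\infty)$ turns $\wt m(k)$ into $\frac{1}{2\pi i}\int_0^\infty f_{\rm FD}(\beta(t-\mu))\frac{1}{4\pi|k|}J(t,k)\,dt$, where $J(t,k)$ is the jump of $\arctan(|k|/2\sqrt{-z})$ across $t>0$. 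To compute $J$ I would use $\sqrt{-(t\pm i0)}=\mp i\sqrt t$ (with an infinitesimal positive real part), so that the boundary values of the argument are $\pm i|k|/(2\sqrt t)$ approached from opposite sides of the branch cut of $\arctan$; a short computation with $\arctan w=\frac{1}{2i}\ln\frac{1+iw}{1-iw}$ then yields, for all $t>0$,
$$J(t,k)=i\,\ln\left|\frac{\sqrt{4t}+|k|}{\sqrt{4t}-|k|}\right|,$$
the extra real parts $\pm\frac{\pi}{2}$ that appear when $|k|>\sqrt{4t}$ cancelling between the two sides of the cut. Putting this back in produces $\wt m(k)=\frac{1}{8\pi^2|k|}\int_0^\infty\ln\big|\frac{\sqrt{4t}+|k|}{\sqrt{4t}-|k|}\big|\,f_{\rm FD}(\beta(t-\mu))\,dt$, which is precisely \eqref{eqn:MexplicitForm} once $|k|$ is read as the multiplier $|\nabla|$.

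I expect the delicate point to be exactly this last step: orienting $\Gamma$ correctly, tracking the principal branches of $\sqrt{-z}$ and of $\arctan$ through their cuts, and verifying the cancellation of the spurious $\pm\frac{\pi}{2}$ terms in the jump; one also has to supply the routine but necessary decay/dominated-convergence estimates on the horizontal and far pieces of $\Gamma$ that legitimize collapsing it onto the spectrum. (The hypothesis $\mu>0$ is not actually used in this computation; the displayed integral converges for every $\mu$ since $f_{\rm FD}$ decays and $\ln\big|\frac{\sqrt{4t}+|k|}{\sqrt{4t}-|k|}\big|$ has only an integrable logarithmic singularity at $t=|k|^2/4$, and from the resulting formula one also reads off $M\ge 0$.)
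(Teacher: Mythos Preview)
Your proposal is correct and follows essentially the same route as the paper: write $M$ as a convolution, compute the Fourier transform of the squared Yukawa kernel to obtain the $\arctan(|k|/\sqrt{-4z})$ symbol, then collapse the Cauchy contour onto $[0,\infty)$ and evaluate the jump via $\arctan w=\tfrac{1}{2i}\log\frac{1+iw}{1-iw}$. The paper carries out exactly the step you flag as delicate --- the branch tracking --- by splitting explicitly into the two regimes $t<|k|^2/4$ and $t>|k|^2/4$ and computing the four $\log$ limits in each case, which is precisely the ``cancellation of the spurious $\pm\pi/2$'' you anticipate.
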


Before we start the proof for Lemma \ref{lem:L0-explicit-compute}, we state and prove a preliminary estimate for $M$ below. 

\begin{lemma}\label{lem:7}
 $M$ is bounded on $L_w^\infty L_x^2$. 
\end{lemma}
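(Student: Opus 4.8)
The plan is to bound $M$ directly from its contour-integral representation \eqref{eqn:Mdef}, using the basic mapping properties of the free resolvent $(z+\Delta)^{-1}$ on $L^2(\R^3)$ together with the exponential decay of $f_{\rm FD}(\beta(z-\mu))$ along the contour $\Gamma$. The key observation is that for fixed $\omega$, $M$ acts on the function $f(\cdot,\omega) \in L^2(\R^3)$ by a formula that only involves operators on $L^2(\R^3)$, so it suffices to obtain an $L^2_x \to L^2_x$ bound on $M$ uniform in the realization $\omega$, and then take the essential supremum in $\omega$ (equivalently, the $\E(\cdot)^{1/p}$ with $p=\infty$) to conclude boundedness on $L^\infty_\om L^2_x$.

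First I would recall that the integral kernel of $M f$ is obtained by taking the density of the operator $-\oint (z+\Delta)^{-1} f (z+\Delta)^{-1}$; writing $R(z) := (z+\Delta)^{-1}$, we have $(\den\, R(z) f R(z))(x) = \int_{\R^3} R(z)(x,y)\, f(y)\, R(z)(y,x)\, dy$. The free resolvent kernel in dimension three is explicit, $R(z)(x,y) = -\frac{1}{4\pi}\frac{e^{-\sqrt{-z}\,|x-y|}}{|x-y|}$ with the branch chosen so that $\Re\sqrt{-z} > 0$ for $z$ off the spectrum $[0,\infty)$. Since $\Gamma$ stays a bounded distance (at most $\frac{3}{4}\beta^{-1}$) from the real axis and the real part of $z$ along the unbounded portion of $\Gamma$ runs to $+\infty$, the factor $f_{\rm FD}(\beta(z-\mu))$ decays like $e^{-\beta \Re z}$ as $\Re z \to +\infty$, which makes $\left|\oint\right|$ finite and, more importantly, makes the contribution of the large-$\Re z$ part of $\Gamma$ negligible. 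On the relevant part of $\Gamma$ one has a lower bound $\Re\sqrt{-z} \ge c_\beta > 0$, so $|R(z)(x,y)| \lesssim \frac{e^{-c_\beta|x-y|}}{|x-y|}$ uniformly in $z\in\Gamma$.

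The main step is then a Schur-test (or Young's inequality) estimate: the kernel $K(x,y) := \int_\Gamma \frac{|dz|}{2\pi}|f_{\rm FD}(\beta(z-\mu))|\, |R(z)(x,y)|^2$ satisfies $K(x,y) \lesssim \left|\oint\right| \cdot \frac{e^{-2c_\beta|x-y|}}{|x-y|^2}$, and the function $x \mapsto \frac{e^{-2c_\beta|x|}}{|x|^2}$ is integrable on $\R^3$ near $0$ (since $|x|^{-2}$ is locally integrable in $3$D) and has exponential decay at infinity, hence lies in $L^1(\R^3)$. Therefore, by Young's convolution inequality, $\|M f(\cdot,\omega)\|_{L^2(\R^3)} \le \|K\|_{L^1_y \text{(uniform in } x)} \|f(\cdot,\omega)\|_{L^2(\R^3)} \le C_{\beta,\mu}\|f(\cdot,\omega)\|_{L^2(\R^3)}$ with a constant independent of $\omega$. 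Taking the supremum over $\omega$ yields $\|Mf\|_{L^\infty_\om L^2_x} \le C_{\beta,\mu}\|f\|_{L^\infty_\om L^2_x}$. One subtlety to address carefully is that $M f$ is a priori defined via the density operator, so I would note that the density of $R(z) f R(z)$ is well-defined because $R(z) f R(z)$ is locally trace class (indeed it has an $L^1_{\loc}$ kernel on the diagonal by the above bound), and that the resulting function is stationary because $f$ is, as already observed after \eqref{eqn:Mdef}.

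The main obstacle I anticipate is handling the diagonal singularity of the resolvent kernel and making the interchange of the contour integral and the spatial integrations rigorous: near $x = y$ the kernel $R(z)(x,y)$ blows up like $|x-y|^{-1}$, so $|R(z)(x,y)|^2 \sim |x-y|^{-2}$, which is exactly borderline integrable in $\R^3$ and forces one to use the three-dimensionality in an essential way. One must also verify that the contour integral $\oint$ converges absolutely in operator norm (or in the appropriate local trace norm) so that Fubini applies; this is where the exponential decay $|f_{\rm FD}(\beta(z-\mu))| \lesssim e^{-\beta\Re z}$ for large $\Re z$, combined with the uniform bound $\|R(z)\|_{L^2\to L^2} \le \dist(z,[0,\infty))^{-1} \le C\beta$ on the compact-in-$\Re z$ part of $\Gamma$, does the job. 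Once these two points are dispatched, the boundedness of $M$ follows from the Schur/Young estimate as above.
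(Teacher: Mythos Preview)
Your proposal is correct and follows essentially the same route as the paper's proof: write $Mf$ as the convolution $-\oint g_z^2 * f$ via the explicit free resolvent kernel, observe that $g_z^2 \in L^1(\R^3)$ (using the exponential decay of $f_{\rm FD}$ along $\Gamma$ to control the contour integral), and apply Young's inequality. One minor caution: a stationary $f(\cdot,\omega)$ lies only in $L^2_{\rm loc}(\R^3)$, not $L^2(\R^3)$, so the Young-type bound should be stated on $L^2(Q)$ using stationarity (as the paper does tacitly, and as is spelled out in its later nonlinear estimate) rather than as an $L^2(\R^3)\to L^2(\R^3)$ bound.
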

\begin{proof}
First, we make the important remark that the branch cut of any relevant complex function in the rest of the paper is taken to be the negative real axis. For $z \in \C$, 
we consider the associated branch cut of $\sqrt{-z}$, and the integral kernel of $(z-(-\Delta))^{-1}$ is
\begin{equation}\label{eq:gz}
g_z(x-y)=-    \frac{1}{4\pi}\frac{e^{-\sqrt{-z}|x-y|}}{|x-y|}.
\end{equation}
 Then by \eqref{eqn:Mdef}, 
 \begin{equation}
     	Mf =  -  \oint  dz  g_z^2 * f. \label{eqn:MisAConv}
 \end{equation}
We note that  $g_z^2(r) = \frac{1}{16\pi^2}\frac{e^{-2\sqrt{-z}|r|}}{|r|^2}$  is $L^1(\R^3)$ and $f_{\rm FD}(\beta(z-\mu))$ decays exponentially in $\beta \Re (z-\mu)$ as $\Re z \rightarrow \infty$.  It follows by Young's inequality that
\begin{equation}
    \|Mf\|_{L_\om^\infty L^2_x} \lesssim \|f\|_{L_\om^\infty L^2_x}.
\end{equation}

\end{proof}

\begin{proof}[Proof of Lemma \ref{lem:L0-explicit-compute}]
The ideas of this proof is based on an unpublished manuscript of I. Chenn and I. M. Sigal. Let $g_z(x-y)$ be the integral kernel of  $(z-(-\Delta))^{-1}$  as in \eqref{eq:gz}. 
It follows that
\begin{equation}
    (z+\Delta)^{-1} \varphi (z+\Delta)^{-1}(x,x) =   \frac{1}{16\pi^2} \int dy \frac{e^{-2\sqrt{-z}|x-y|}}{|x-y|^2} \varphi(y)  \label{eqn:ugly-unfourier-transformed-lin-op}
\end{equation}
for any sufficiently regular $\varphi$.  To evaluate \eqref{eqn:ugly-unfourier-transformed-lin-op}, we Fourier transform $\frac{e^{-2\sqrt{-z}|x|}}{|x|^2}$: 
\begin{align}
    \int \frac{e^{-2\sqrt{-z}|x|}}{|x|^2} e^{-i x \cdot p } dx 
    &= \int_0^\infty dr \int_0^\pi d\theta \int_0^{2\pi} d\varphi \frac{e^{-2r\sqrt{-z}}}{r^2}  e^{-i \cos(\theta) r|p| }   r^2 \sin(\theta) \\
    &= 4\pi\int_0^\infty dr  \frac{\sin(r|p|)}{r|p|}e^{-2r\sqrt{-z}} \\
    &=  \frac{4\pi\arctan(|p|/\sqrt{-4z})}{|p|}. \label{eqn:arctan-fun-1}
\end{align}
 
By \eqref{eqn:MisAConv}, \eqref{eqn:ugly-unfourier-transformed-lin-op}, and \eqref{eqn:arctan-fun-1},
\begin{equation}
	M= -\frac{1}{8\pi^2} \frac{1}{|\nabla| i }\int_{\Gamma'} f_{\rm FD}(\beta(z-\mu)) \arctan( |\nabla|/\sqrt{-4z}),  \label{eqn:M-1}
\end{equation}
where the contour $\Gamma'$ is homeomorphic to $\Gamma$ (see Figture \ref{fig:cauchy-int-contour-general}) and is defined as follows. Let $0 < \alpha < \beta^{-1}$ be small and $L[a,b]$ denote the line segment connecting $a,b \in \C$. Then, $\Gamma' = L[\infty+ i\alpha, -\alpha +i\alpha] \cup L[-\alpha+ i\alpha, -\alpha -i\alpha] \cup L[-\alpha-i\alpha, \infty-i\alpha]$.

 Next, we simplify \eqref{eqn:M-1} by taking $\alpha \rightarrow 0$. For ease of notation, we will denote both $|\nabla|$ and an arbitrary point in its spectrum by $p$.   We note that
\begin{equation}
	\arctan(z) = \frac{i}{2}\, \Big(\log(1-iz)-\log(1+iz) \Big),
\end{equation}
where $\log$ has the branch cut $(-\infty,0]$.  Moreover, for any $p > 0$ fixed, the integrand 
\[
    f_{\rm FD}(\beta(z-\mu)) \arctan( p/\sqrt{-4z})
\]
has singularities contained in $[0, \infty] \cup (\mu + i\beta^{-1}(2\Z+1))$, otherwise it is holomorphic near the real axis. Let $0 < \alpha$ be small. These observations justify our computation of the integral along another contour which we explicitly parameterize as: 
\begin{align}
	\g_1(t) =& t + \alpha i, t \in (-\alpha, \infty) ,\\
	\g_2(t) =& t - \alpha i, t \in (-\alpha, \infty) ,\\
	\g_3(t) =& -\alpha +  it, t \in (-\alpha, \alpha) \, .
\end{align}
Note that by holomorphicity outside of the singularity of the integrand, the value of the integral is independent of $\alpha$. Hence we may take $\alpha \rightarrow 0$.  For notation, let $f(z) = f_{\rm FD}(\beta(z-\mu))$ and for any path $\gamma : [a,b] \rightarrow \C$, let $\gamma^{-1}$ denote the same path traversed backward. Ignoring pre-factors constants, computing the contour integral \eqref{eqn:M-1} along $\g_1^{-1}$ and $\g_2$, one has 
\begin{align}
	-\frac{1}{8\pi^2}\frac{1}{p i}\int_{-\alpha}^{\infty}  f(t-i\alpha) &\arctan( p/\sqrt{-4(t-i\alpha)}) \notag \\
	&\hspace{2cm}- f(t+i\alpha) \arctan( p/\sqrt{-4(t+i\alpha)}) dt  \notag \\
	=& -\frac{1}{8\pi^2}\frac{1}{p i}\int_{-\alpha}^{\infty} f(t-i\alpha) \arctan( p/\sqrt{-4t+4i\alpha}) \notag \\
	&\hspace{2cm}- f(t+i\alpha) \arctan( p/\sqrt{-4t-4i\alpha}) dt  \label{eqn:contour-int-1}.
\end{align}
Since $f(x)$ is continuous away from its poles and exponentially decaying for  $\Re z > \mu $, in the limit $\alpha \rightarrow 0$, it suffices for us to compute 
\begin{equation}
	\int_{-\alpha}^{\infty} f(t)\, \left (\arctan( p/\sqrt{-4t+4i\alpha}) - \arctan( p/\sqrt{-4t-4i\alpha}) \,  \right)\,  dt \, \label{eqn:contour-int-2}
\end{equation}
in place of \eqref{eqn:contour-int-1}.
We split the integral in \eqref{eqn:contour-int-2} into
\begin{equation}
	\int_{-\alpha}^{\frac{1}{4}p^2} + \int_{\frac{1}{4}p^2}^\infty. \label{eqn:contour-int-3}
\end{equation}
We compute the first integral  in \eqref{eqn:contour-int-3}  and assume that $t < \frac{1}{4}p^2$. We note that \begin{multline}
    \frac{2}{i}\, \left(\arctan(p/\sqrt{-4t+4i\alpha}) - \arctan(p/\sqrt{-4t-4i\alpha}) \, \right)    \\
	=  \log\left( 1 - i \frac{p}{\sqrt{-4t+4i\alpha}} \right) - \log\left( 1 - i \frac{p}{\sqrt{-4t-4i\alpha}} \right)  \\
	 + \log \left( 1+ i \frac{p}{\sqrt{-4t-4i\alpha}} \right) - \log \left( 1+ i \frac{p}{\sqrt{-4t+4i\alpha}} \right)  .\label{eqn:contour-comp-6}
\end{multline}
 Since $t, \alpha > 0$,  to keep track of the jump discontinuity in $\log$, let us write,
\begin{equation}
	 \frac{ip}{\sqrt{-4t \pm 4i\alpha}} =  \frac{ip}{\pm i\sqrt{4t} \sqrt{1 \mp i \frac{\alpha}{t}}}=  \pm  \frac{p}{\sqrt{4t}}  \left(1 \mp i \frac{\alpha}{t}\right)^{-1/2}   \,
\end{equation}
for $t \geq 0$.
Denote $\delta_\pm = \pm  \frac{p}{\sqrt{4t}} ((1 \mp i \frac{\alpha}{t})^{-1/2} - 1)$. Hence
\begin{equation}
	1- \frac{ip}{\sqrt{-4t \pm 4i\alpha}} = 1 \mp \frac{p}{\sqrt{4t}} - \delta_\pm.
\end{equation} 
We note that $\delta_\pm \rightarrow 0$ as $\alpha \rightarrow 0$ and  $\delta_+$ and $\delta_-$  are in the upper half-plane. 
Since $t < \frac{1}{4}  p^2$, using $\lim_{b \rightarrow 0^\pm} \log(-a+ib) = \log(a) \pm i\pi$ for $a > 0$, we see that 
\begin{equation}
	\lim_{\alpha \rightarrow 0} \log \left( 1- \frac{ip}{\sqrt{-4t + 4i\alpha}} \right) = \log\left( \frac{p}{\sqrt{4t}} - 1 \right) - i\pi 
\end{equation}
and similarly
\begin{equation}
	\lim_{\alpha \rightarrow 0} \log \left( 1- \frac{ip}{\sqrt{-4t - 4i\alpha}} \right) = \log\left( \frac{p}{\sqrt{4t}} + 1 \right).
\end{equation}
So we have that
\begin{multline}
   	\lim_{\alpha \rightarrow 0}   \log \left( 1- \frac{ip}{\sqrt{-4t + 4i\alpha}} \right) - \log \left( 1- \frac{ip}{\sqrt{-4t - 4i\alpha}} \right) \\
	=  \log\left( \frac{p}{\sqrt{4t}} - 1 \right)  -  \log\left( \frac{p}{\sqrt{4t}} + 1 \right)-i\pi . \label{eqn:contour-comp-5} 
\end{multline}

By the same token, and  since $t < \frac{1}{4}p^2$,
\begin{align}
	1 + \frac{ip}{\sqrt{-4t \pm 4i\alpha}} = 1 \pm \frac{p}{\sqrt{4t}} + \delta_\pm.
\end{align}
Using $\lim_{b \rightarrow 0^\pm} \log(-a+ib) = \log(a) \pm i\pi$ for $a > 0$, we see that
\begin{align}
	\lim_{\alpha \rightarrow 0} \log \left( 1 + \frac{ip}{\sqrt{-4t - 4i\alpha}} \right) = \log\left( \frac{p}{\sqrt{4t}} - 1 \right) + i\pi ,
\end{align}
and similarly
\begin{align}
	\lim_{\alpha \rightarrow 0} \log \left( 1 + \frac{ip}{\sqrt{-4t + 4i\alpha}} \right) = \log\left( \frac{p}{\sqrt{4t}} +1 \right).
\end{align}
So we have that
\begin{multline}
    \lim_{\alpha \rightarrow 0}    \log \left( 1 + \frac{ip}{\sqrt{-4t - 4i\alpha}} \right) - \log \left( 1 + \frac{ip}{\sqrt{-4t + 4i\alpha}} \right) \\
		=  \log\left( \frac{p}{\sqrt{4t}} - 1 \right) - \log\left( \frac{p}{\sqrt{4t}} +1 \right)  +i\pi  . \label{eqn:contour-comp-4}
\end{multline}
It follows  by \eqref{eqn:contour-comp-6}, \eqref{eqn:contour-comp-5}, \eqref{eqn:contour-comp-4}  that
\begin{multline}
    	\lim_{\alpha \rightarrow 0}    \arctan(  p/\sqrt{-4t+4i\alpha}) - \arctan(  p/\sqrt{-4t-4i\alpha})   \\
		=  i \left( \log\Big( \frac{p}{\sqrt{4t}} - 1 \Big) - \log\Big( \frac{p}{\sqrt{4t}} +1 \Big) \right). \label{eqn:contour-comp-7}
\end{multline}
Hence, in the limit $\alpha \rightarrow 0$, the $\g_1^{-1},\g_2$ portion of the  first integral in \eqref{eqn:contour-int-3}  is
\begin{multline}
 \int_0^{p^2/4}= i\int_0^{ p^2/4} \left( \log\Big( \frac{p}{\sqrt{4t}} - 1 \Big) - \log\Big( \frac{p}{\sqrt{4t}} +1 \Big) \right) f(t) dt \\
	= -i\int_0^{p^2/4}  \log\left( \frac{p + \sqrt{4t}}{p-\sqrt{4t}} \right) f(t) dt.
\end{multline}

Now we consider the second integral  in \eqref{eqn:contour-int-3}.
In this case, for $t >  p^2/4 $, 
\begin{multline}
    \frac{2}{i}\, \left(\, \arctan(  p/\sqrt{-4t+4i\alpha}) - \arctan(  p/\sqrt{-4t-4i\alpha}) \, \right)\\
	=  \log\left( 1 - i \frac{p}{\sqrt{-4t+4i\alpha}} \right) - \log\left( 1 - i \frac{p}{\sqrt{-4t-4i\alpha}} \right)   \\
	 + \log \left( 1+ i \frac{p}{\sqrt{-4t-4i\alpha}} \right) - \log \left( 1+ i \frac{p}{\sqrt{-4t+4i\alpha}} \right). 
\end{multline}
 Via the same argument as \eqref{eqn:contour-comp-7},  we get
\begin{multline}
  \arctan(  p/\sqrt{-4t+4i\alpha}) - \arctan(  p/\sqrt{-4t-4i\alpha}) \\
	\longrightarrow
	i \Big(\log \left( 1-  \frac{p}{\sqrt{4t}} \right) -   \log\left( 1 + \frac{p}{\sqrt{4t}} \right)  \Big)
	= i\log \left( \frac{\sqrt{4t}-  p}{\sqrt{4t}+p} \right) 
\end{multline}

as $\alpha \rightarrow 0$.  Putting it as the integrand for the second integral in \eqref{eqn:contour-int-3}, we get
\begin{equation}
    \int_{ p^2/4}^\infty  =  i \int_{p^2/4}^\infty  \log \left( \frac{\sqrt{4t}-   p}{\sqrt{4t}+ p} \right)f(t) dt 
		 =   -i\int_{p^2/4}^\infty \log \left( \frac{\sqrt{4t}+  p}{\sqrt{4t}- p} \right)f(t) dt.
\end{equation}

It follows that by \eqref{eqn:M-1} and \eqref{eqn:contour-int-1}
\begin{multline}
	M= -\frac{1}{8\pi^2} \frac{1}{p i }  \int_0^{ p^2/4}      f(t)\, \left (\arctan( p/\sqrt{-4t+4i\alpha}) - \arctan( p/\sqrt{-4t-4i\alpha}) \,  \right)\,  dt \, \\
	-\frac{1}{8\pi^2} \frac{1}{p i } \int_{ p^2/4}^\infty f(t)\, \left (\arctan( p/\sqrt{-4t+4i\alpha}) - \arctan( p/\sqrt{-4t-4i\alpha}) \,  \right)\,  dt \\
=-\frac{1}{8\pi^2} \frac{1}{p i } \Big[	-i\int_0^{p^2/4}  \log\left( \frac{p + \sqrt{4t}}{p-\sqrt{4t}} \right) f(t) dt\Big]\\
	-\frac{1}{8\pi^2} \frac{1}{p i }\Big[-i\int_{p^2/4}^\infty \log \left( \frac{\sqrt{4t}+  p}{\sqrt{4t}- p} \right)f(t) dt\Big] \\
=\frac{1}{8\pi^2} \frac{1}{p  }\int_{0}^\infty \log \left( \left|\frac{\sqrt{4t}+  p}{\sqrt{4t}- p}\right| \right)f(t) dt,
\end{multline}
which is \eqref{eqn:MexplicitForm}.
 The proof of Lemma \ref{lem:L0-explicit-compute} is now complete. 
\end{proof}

\subsection{Lower bound of $L$ on $L^2(\R^3)$} \label{subsec:lower-on-L2R3}
\begin{theorem} \label{thm:L-lower-on-R3}
The operator $L$ given in \eqref{eqn:Mdef} is bounded below on $L^2(\R^3)$ by
\begin{equation}\label{eq:L-lower1}
    L \ge c_0( -\Delta + m_*)
\end{equation}
where $c_0>0$ is a universal constant and 
\begin{equation}
    m_* = \min(\mu,\sqrt{\mu}). \label{eqn:m-star-def}
\end{equation}
\end{theorem}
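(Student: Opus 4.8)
The plan is to use the explicit formula \eqref{eqn:MexplicitForm} from Lemma \ref{lem:L0-explicit-compute}, which expresses $M$ as a Fourier multiplier. Since $M$ commutes with $-\Delta$, it suffices to diagonalize everything in Fourier space: writing $p = |k|$ for the Fourier variable, we have $L = \frac{1}{4\pi} p^2 + m(p)$ where
\begin{equation}
    m(p) = \frac{1}{8\pi^2 p} \int_0^\infty \ln\left( \left| \frac{\sqrt{4t} + p}{\sqrt{4t} - p} \right| \right) f_{\rm FD}(\beta(t-\mu))\, dt. \label{eqn:m-multiplier}
\end{equation}
Thus \eqref{eq:L-lower1} reduces to the pointwise multiplier inequality $\frac{1}{4\pi} p^2 + m(p) \ge c_0(p^2 + m_*)$ for all $p \ge 0$, which, after absorbing constants, amounts to showing $m(p) \gtrsim \min(\mu, \sqrt{\mu})$ uniformly in $p$ (the $p^2$ term only helps for large $p$, so the content is the lower bound on $m(p)$, at least for bounded $p$).

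The main estimate, then, is a lower bound on the one-dimensional integral \eqref{eqn:m-multiplier}. First I would note that the logarithmic kernel $\ln\left|\frac{\sqrt{4t}+p}{\sqrt{4t}-p}\right|$ is nonnegative, integrable in $t$ near the apparent singularity $t = p^2/4$ (the log singularity is integrable), and behaves like $\frac{p}{\sqrt{t}}$ for $t \gg p^2$ and like $-\ln|\sqrt{4t}-p| \sim$ large near $t=p^2/4$. Since $f_{\rm FD}(\beta(t-\mu)) \ge f_{\rm FD}(0) = \tfrac12$ for $t \le \mu$ and stays bounded below by a fixed positive constant on, say, $t \in [0, \mu]$, I would lower-bound the integral by restricting to $t \in [0,\mu]$ and replacing $f_{\rm FD}$ by $\tfrac12$:
\begin{equation}
    m(p) \ge \frac{1}{16\pi^2 p} \int_0^{\mu} \ln\left| \frac{\sqrt{4t} + p}{\sqrt{4t} - p} \right| dt. \label{eqn:m-lower-reduction}
\end{equation}
It then remains to show the elementary but slightly delicate claim that $\frac{1}{p} \int_0^\mu \ln\left|\frac{\sqrt{4t}+p}{\sqrt{4t}-p}\right| dt \gtrsim \min(\mu, \sqrt\mu)$ uniformly in $p > 0$. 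I would split into regimes: for $p$ small relative to $\sqrt\mu$, expand the log as $\frac{2p}{\sqrt{4t}} + O(p^3/t^{3/2})$ away from $t \sim p^2/4$, so the integral is $\approx \frac{1}{p}\int \frac{p}{\sqrt t}\,dt \approx \sqrt\mu$; for $p$ comparable to $\sqrt\mu$, the substitution $t = p^2 s/4$ turns the integral into $\frac{p}{4}\int_0^{4\mu/p^2} \ln\left|\frac{\sqrt s + 1}{\sqrt s - 1}\right| ds$, and since $4\mu/p^2$ is bounded below, the $s$-integral is bounded below by a fixed positive constant times $\min(1, \mu/p^2)$, giving $\gtrsim \min(p, \mu/p) \gtrsim \sqrt\mu$ in this regime; for $p$ large, one can use a similar scaling or simply the bound from a window $s \in [0, c]$ together with the $\frac{1}{4\pi}p^2$ term from $L$ itself (which dominates $c_0 p^2$ once $p$ is large, so we need $m(p) \ge c_0 m_*$ only, and the scaling shows $m(p) \gtrsim \mu/p \cdot$ something — actually for large $p$ the window $t\in[0,\mu]$ sits entirely in the region $\sqrt{4t} < p$, and the integrand is $\ln\frac{p+\sqrt{4t}}{p-\sqrt{4t}} \ge \ln\frac{p}{p-\sqrt{4\mu}} \gtrsim \sqrt\mu/p$ pointwise in a bulk portion, yielding $m(p)\gtrsim \mu/p^2$, which combines with $p^2$ to give the result after optimizing constants).

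The main obstacle I anticipate is the careful bookkeeping in the regime where $p^2/4$ is comparable to or exceeds $\mu$ — there the singularity of the kernel at $t=p^2/4$ is either at the edge of or outside the integration window $[0,\mu]$, so the "cheap" bounds need to be organized by cases ($p^2 \lesssim \mu$ versus $p^2 \gtrsim \mu$), and one must track how the constant degrades. A cleaner route, which I would try to execute, is: (i) handle $p \le 1$ and $p\ge 1$ separately using the scaling $t\mapsto p^2 s/4$; (ii) in each case reduce to showing $\int_0^{A} \ln\left|\frac{\sqrt s+1}{\sqrt s -1}\right| ds \ge c\min(A, \sqrt A)$ for a universal $c$, where $A = 4\mu/p^2$; (iii) prove this last purely scalar inequality by noting the integrand is $\sim \frac{2}{\sqrt s}$ for small $s$ (giving $\int_0^A \sim \sqrt A$ when $A\le 1$) and $\sim \frac{2}{\sqrt s}$ again for large $s$ with an integrable log bump at $s=1$ (giving a positive constant, hence $\gtrsim 1 \ge \sqrt A/\sqrt A$ scaling, i.e. $\gtrsim \min(A,\sqrt A)$ when $A \ge 1$). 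Feeding $A = 4\mu/p^2$ back and using $p^2 + m_* \lesssim p^2 + \min(\mu,\sqrt\mu)$ against $\frac{1}{4\pi}p^2 + m(p)$ then yields \eqref{eq:L-lower1} with an explicit universal $c_0$; I would also double check $M \ge 0$ (already remarked in the text) so that the bound is not vacuous.
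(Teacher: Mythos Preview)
Your approach is correct and follows the same strategy as the paper: reduce to a pointwise Fourier-multiplier inequality via Lemma \ref{lem:L0-explicit-compute}, bound $f_{\rm FD}(\beta(t-\mu)) \ge \tfrac12 \chi_{[0,\mu]}(t)$ from below, and then control the resulting one-variable integral by a case split in $p$. The paper's execution is considerably shorter, however. Instead of a scaling substitution and a three-regime analysis, it uses only the two elementary inequalities
\[
\ln\left|\frac{x+1}{x-1}\right| > 2x \quad (x<1), \qquad \ln\left|\frac{x+1}{x-1}\right| > \frac{2}{x} \quad (x>1),
\]
applied with $x = p/\sqrt{4t}$, and splits just once at $p^2/4 = \mu$. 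For $p^2/4 < \mu$ one breaks the $t$-integral at $t=p^2/4$ and computes directly $M \gtrsim \int_0^{p^2/4}\frac{\sqrt t}{p^2}\,dt + \int_{p^2/4}^\mu \frac{dt}{\sqrt t} \ge \tfrac16\sqrt\mu$; for $p^2/4 \ge \mu$ one simply discards $M \ge 0$ and uses $\tfrac{1}{8\pi}p^2 \gtrsim \mu$. This sidesteps entirely the bookkeeping you flag as the main obstacle. One small slip in your write-up: for small $s$ the integrand $\ln\bigl|\tfrac{\sqrt s+1}{\sqrt s-1}\bigr|$ behaves like $2\sqrt s$, not $2/\sqrt s$, so $\int_0^A \sim \tfrac{4}{3} A^{3/2}$ rather than $\sqrt A$ when $A \le 1$; this is harmless since in that regime ($p^2 \ge 4\mu$) the term $\tfrac{1}{4\pi}p^2$ already dominates $m_*$.
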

\begin{proof}
 Let $p = |\nabla|$.  By elementary calculus, we have the following two estimates
\begin{align}
	\ln\left| \frac{x+1}{x-1} \right| > 2x \text{ for } x < 1 \\
	\ln\left| \frac{x+1}{x-1} \right| > \frac{2}{x} \text{ for } x > 1. 
\end{align}
 In the eigenspace with $p^2/4 < \mu$,  we use $f_{\rm FD}(\beta(t-\mu)) \geq \frac{1}{2}\chi_{[0,\mu]}$, where $\chi_S$ is the indicator function on $S \subset \R$. For a particular $p^2/4 < \mu$ fixed, equation \eqref{eqn:MexplicitForm} shows
\begin{align}
	M \gtrsim &  \int_0^{p^2/4} \frac{\sqrt{t}}{p^2} +  \int_{p^2/4}^\mu \frac{1}{\sqrt{t}} \\
		= & \left. \frac{1}{p^2} \frac{2}{3} t^{3/2} \right|_0^{p^2/4} + \left. 2 t^{1/2} \right|_{p^2/4}^\mu \\
		\geq & \frac{1}{6}\sqrt{\mu}. \label{eqn:L-lower-sqrt-mu}
\end{align}
On the other hand, in the eigenspace with $p^2/4 \ge  \mu$, we use $L=\frac{1}{8\pi}p^2+ (\frac{1}{8\pi}p^2+M)\ge \frac{1}{8\pi}p^2+ ( \frac{1}{8\pi}p^2)$. We see that $\frac{1}{8\pi}p^2  \gtrsim  \mu$.  Combining with \eqref{eqn:L-lower-sqrt-mu}, Theorem \ref{thm:L-lower-on-R3} is proved. 
\end{proof}

\subsection{Lower bound for $L$ on $L^\infty_w L^2_x$: proof of Theorem \ref{thm:LLowerBound-final}} \label{subsec:pf:thm:LLowerBound-final}
By Lemma \ref{lem:7}, $L$ is bounded operator on $L^\infty_w L^2_x$. In this subsection, we want to extend the lower bound \eqref{eq:L-lower1} also to $L^\infty_w L^2_x$. 
From this subsection on, we will deal with two types of Laplacians: the usual Laplacian $-\Delta$ on $L^2_x(\R^3)$, and the  stationary Laplacian $-\Delta_s$ on $L_\omega^2 L^2_x$. The stationary  Laplacian is simply the   usual Laplacian in  $x$ variables,  acting  on $L^2(\Omega\times Q)$,  with  stationary  boundary  conditions on $Q\subset \R^3$.  The properties of the  stationary Laplacian $-\Delta_s$ have been studied in e.g., \cite{DS}, see also in Section 3.1 of \cite{CLL}. We may frequently drop the lower index $s$ and denote by $-\Delta_s=-\Delta$ when there is no ambiguity, especially when the random realization $\omega$ is fixed (almost surely). 

\begin{proof}[Proof of Theorem \ref{thm:LLowerBound-final}]
Let $Q$ be a fundamental domain of $\lat$ and $f \in L^\infty_\om L^2_x$ and $C_c^2(Q)$ denote the set of compactly supported $C^2$ function on $Q$. Then
\begin{align}
	\|(-\frac{1}{4\pi}\Delta_s + M)f\|_{L^\infty_\om L^2_x} :=& \sup_{\om \in \Om} \|(-\frac{1}{4\pi}\Delta_s + M)f\|_{L^2(Q)} \\
		=& \sup_{\om \in \Om}\sup_{\psi \in C_c^2(Q)} \frac{\lan \psi, (-\frac{1}{4\pi}\Delta_s + M)f \ran_{L^2(Q)}}{\|\psi\|_{L^2(Q)}}  . \label{eqn:start1}
\end{align}
The goal is to choose a nice $\psi$ for a lower bound. Intuitively, we should choose $\psi = (-\frac{1}{4\pi}\Delta_s + M)^{-1} f$. We accomplish this in 5 steps below.

For notation simplicity below, we will first fix $\omega$ and follow \eqref{eqn:oint-def} and also use
\begin{equation}
	r(z)  := (z+\Delta)^{-1}.
\end{equation}
In this notation, the linear operator is
\[
	Lf = -\frac{1}{4\pi}\Delta f + Mf = -\frac{1}{4\pi}\Delta f - \den \oint r(z) f r(z). 
\]

\textbf{Step 1: Move $-\frac{1}{4\pi}\Delta + M$ onto $\psi$.} \\
We have a solid understanding of $-\frac{1}{4\pi}\Delta + M$ on $L^2(\R^3)$ functions; so we would like to use it.
Since $\psi \in C_c^2(Q)$, we see that
\begin{equation}
	\lan \psi, (-\frac{1}{4\pi}\Delta) f \ran_{L^2(Q)} = \lan (-\frac{1}{4\pi}\Delta)\psi, f \ran_{L^2(Q)}. \label{eqn:DeltaSymmetric}
\end{equation}
Using the definition \eqref{eqn:Mdef} of $M$, we see that
\begin{equation}
	\lan \psi, M f \ran_{L^2(Q)} = -\oint \Tr \bar \psi r(z) f r(z) = -\oint \Tr \overline{r(\bar z) \psi r(\bar z)} f = \lan M \psi, f \ran_{L^2(\R^3)} \label{eqn:MSymmetric}
\end{equation}
by the cyclicity of trace and the fact the contour $\Gamma$ can be chosen to be invariant under complex conjugation. We remark that since   $\psi \in C^2_c(Q)$, $\Tr$ is taken to be the trace on $L^2(\R^3)$ instead of $L^2(Q)$. By equations \eqref{eqn:DeltaSymmetric} and \eqref{eqn:MSymmetric}, we see that
\begin{equation}
	\lan \psi, (-\frac{1}{4\pi}\Delta + M)f \ran_{L^2(Q)} = \lan (-\frac{1}{4\pi}\Delta + M)\psi, f \ran_{L^2(\R^3)} \label{eqn:DMSymmetric}.
\end{equation}

\textbf{Step 2: choosing $\psi$.} \\
By Theorem \ref{thm:L-lower-on-R3}, $-\frac{1}{4\pi}\Delta + M$ is bounded from below on $L^2(\R^3)$. One has 
\[\|(-\frac{1}{4\pi}\Delta + M)g\|_{L^2(\R^3)}\ge c_0 \|(-\Delta +m_\ast)g\|_{L^2(\R^3)}\ge c_0 m_\ast\| g\|_{H^2(\R^3)}.  \] In particular, $-\frac{1}{4\pi}\Delta + M$ is invertible and 
\begin{equation} \label{eq:L-1}
  \|(-\frac{1}{4\pi}\Delta + M)^{-1}(f\mid_Q)\|_{H^2(\R^3)}\le c_0^{-1} m_\ast^{-1} \| f\mid_Q \|_{L^2(\R^3)} .  
\end{equation}
Thus, we choose
\begin{equation}
	\psi :=  \chi (-\frac{1}{4\pi}\Delta + M)^{-1}(f\mid_Q), \label{eqn:psiDef}
\end{equation}
where $\chi$ is a compactly supported bump function on $Q$ whose exact properties will be determined later and $f\mid_Q$ is the restriction of $f$ to $Q$. 
We remark that despite the fact that $\psi$ may not be in $C_c^2(Q)$, it can be approximated arbitrarily close in the $H^2$ norm by functions from $C_c^2(Q)$. Thus, it causes no disruption to our proof to consider $\psi$ instead of a proper $C_c^2(Q)$ function. Moreover, this choice of $\psi$ satisfies \eqref{eqn:DeltaSymmetric}.  It follows that
\begin{equation}
	(-\frac{1}{4\pi}\Delta + M)\psi = \chi f\mid_Q + [-\frac{1}{4\pi}\Delta + M, \chi](-\frac{1}{4\pi}\Delta + M)^{-1}(f\mid_Q).
\end{equation}
Combining with \eqref{eqn:DMSymmetric}, we see that
\begin{multline}
   	\lan \psi, (-\frac{1}{4\pi}\Delta + M)f \ran_{L^2(Q)} \\ = \|\chi^{1/2} f\|^2_{L^2(Q)} + \lan [-\frac{1}{4\pi}\Delta + M, \chi](-\frac{1}{4\pi}\Delta + M)^{-1}(f\mid_Q), f \ran_{L^2(\R^3)}. \label{eqn:main1} 
\end{multline}
We estimate the  second term in \eqref{eqn:main1}  below.

\textbf{Step 3: Error estimates for $[\Delta, \chi]$ term.} \\
A simple computation shows that
\begin{equation}
	[-\Delta, \chi] = -2\nabla \chi \cdot \nabla - \Delta \chi.
\end{equation}
Using Theorem \ref{thm:L-lower-on-R3} once more and the fact $\chi$ is supported in $Q$, it follows that
\begin{multline}
   |\lan [-\Delta, \chi](-\frac{1}{4\pi}\Delta + M)^{-1}(f\mid_Q), f \ran_{L^2(\R^3)}|  
   = |\lan [-\Delta, \chi](-\frac{1}{4\pi}\Delta + M)^{-1}(f\mid_Q), f \ran_{L^2(Q)}| \\ 
 \le\Big| \lan 2\nabla \chi \cdot \nabla(-\frac{1}{4\pi}\Delta + M)^{-1}(f\mid_Q), f \ran_{L^2(Q)} \Big|  
 +\Big| \lan \Delta \chi(-\frac{1}{4\pi}\Delta + M)^{-1}(f\mid_Q), f \ran_{L^2(Q)} \Big| \\
 \le2 \|\nabla \chi\|_{L^\infty}\|\nabla (-\frac{1}{4\pi}\Delta + M)^{-1}(f\mid_Q)\|_{L^2(Q)}\| f \|_{L^2(Q)} \\
 +\|\Delta \chi\|_{L^\infty} \| (-\frac{1}{4\pi}\Delta + M)^{-1}(f\mid_Q)\|_{L^2(Q)}\| f \|_{L^2(Q)} \\
 \le  2 \|\nabla \chi\|_{L^\infty}\, c_0^{-1}m_*^{-1} \| f \|_{L^2(Q)}^2 
 +\|\Delta \chi\|_{L^\infty}  \, c_0^{-1}m_*^{-1} \| f \|_{L^2(Q)}^2 \\ 
 = c_0^{-1}m_*^{-1}\Big(2 \|\nabla \chi\|_{L^\infty} 
 +\|\Delta \chi\|_{L^\infty} \Big ) \| f \|_{L^2(Q)}^2 \label{eqn:main2}
\end{multline}
where $c_0,m_*$ are given in \eqref{eqn:m-star-def} and we used \eqref{eq:L-1}.

\textbf{Step 4: Error estimates for $[M, \chi]$ term.} \\
Let $u \in L^2(\R^3)$ and $v \in L^2_{\rm loc}(\R^3)$. Then
\begin{align}
	\lan [M, \chi]u, v \ran_{L^2(\R^3)} =& \oint \Tr r(z) \chi u r(z) v  - \chi  r(z) u r(z) v \\
		=& \oint \Tr [r(z), \chi]ur(z) v \\
		=& \oint \Tr r(z)(2\nabla \chi \nabla + \Delta \chi) r(z) u r(z) v .
\end{align}
By the cyclicity of trace, we move $r(z)$ to the right of $v$ to obtain
\begin{equation}
	\lan [M, \chi]u, v \ran_{L^2(\R^3)}	=  \oint \Tr (2\nabla \chi \nabla + \Delta \chi) r(z) u r(z) v r(z). 
\end{equation}
Since $\chi$ is supported on $Q$, we may write  
\begin{align}
	\lan [M, \chi]u, v \ran_{L^2(\R^3)}	=  \oint \Tr {\bf 1}_Q (2\nabla \chi \nabla + \Delta \chi)  r(z) u r(z) v r(z){\bf 1}_Q, \label{eqn:some-eqn-1}
\end{align}
where ${\bf 1}_Q$ is the indicator function of the set $Q$.  Let $g_z$ be the integral kernel of $r(z) =(z+\Delta)^{-1}$ on $L^2(\R^3)$ and recall the definition of $\left|\oint\right|$ from \eqref{eqn:oint-def}. Writing \eqref{eqn:some-eqn-1} out explicitly in terms of integral kernels, we obtain
\begin{multline}
   |\lan [M, \chi]u, v \ran_{L^2(\R^3)}|
	\leq   2(\|\nabla \chi\|_{L^\infty} + \|\Delta \chi\|_{L^\infty})  \\
  \times \left|\oint\right|  \int_Q dx \int_{\R^{3+3}} dy_1 dy_2   
	 |(1+\nabla) g_z(x-y_1)u(y_1) g_z(y_1-y_2) v(y_2) g_z(y_2-x)|. \label{eqn:Mchi1} 
\end{multline}
Using a change of variable $y_1 \mapsto y_1 + x$ and $y_2 \mapsto y_2 + x$, we see that
\begin{multline}
     \left|\oint\right|  \int_Q dx \int_{\R^{3+3}} dy_1 dy_2  
 |(1+\nabla)g_z(x-y_1)u(y_1) g_z(y_1-y_2) v(y_2) g_z(y_2-x)|  \\
	=  \left|\oint\right|  \int_Q dx \int_{\R^{3+3}} dy_1 dy_2  |(1+\nabla)g_z(y_1)u(y_1+x) g_z(y_1-y_2) v(y_2+x) g_z(y_2)|. \label{eqn:Mchi2}
\end{multline}
Applying H\"{o}lder with $\frac{1}{2} + \frac{1}{2} = 1$ to the $dx$ integral, we see that
\begin{multline}
    \left|\oint\right|  \int_Q dx \int_{\R^{3+3}} dy_1 dy_2   |(1+\nabla)g_z(y_1)u(y_1+x) g_z(y_1-y_2) v(y_2+x) g_z(y_2)|  \\
	\leq   \left|\oint\right| \int_{\R^{3+3}} dy_1 dy_2 |(1+\nabla)g_z(y_1)g_z(y_1-y_2)g_z(y_2)| \|u\|_{L^2(Q+y_1)} \|v\|_{L^2(Q+y_2)} \\
	\leq   \left|\oint\right| \int_{\R^{3+3}} dy_1 dy_2 |(1+\nabla)g_z(y_1)g_z(y_1-y_2)g_z(y_2)| \|u\|_{L^2(\R^3)} \|v\|_{L^2(Q+y_2)}. \label{eqn:Mchi3}
\end{multline}
We specialize to $u = (-\frac{1}{4\pi}\Delta + M)^{-1}(f\mid_Q)$ and $v = f$. By Theorem \ref{thm:L-lower-on-R3}, we note that
\begin{equation}
	\|u\|_{L^2(\R^3)} \le c_0^{-1} m_*^{-1} \|f\|_{L^2(Q)}, \label{eqn:Mchi4}
\end{equation}
where $c_0,m_*$ are given in \eqref{eqn:m-star-def}.
Combining \eqref{eqn:Mchi1}, \eqref{eqn:Mchi2}, \eqref{eqn:Mchi3}, and \eqref{eqn:Mchi4}, we see that
\begin{align}
	& \sup_{\om \in \Om} |\lan [M, \chi]u, v \ran_{L^2(\R^3)}|	\notag \\
& \quad 	\le   2c_0^{-1}m_*^{-1} (\|\nabla \chi\|_{L^\infty} + \|\Delta \chi\|_{L^\infty})  \|f\|_{L^\infty_\om L^2_x}\notag \\
	& \quad \quad \times \left|\oint \right| \int_{\R^{3+3}} dy_1 dy_2 |(1+\nabla)g_z(y_1)g_z(y_1-y_2)g_z(y_2)|  \sup_{\om \in \Om} \|f\|_{L^2(Q+y_2)}.
\end{align}
 
Let
\begin{equation}
    A = 2 c_0^{-1}(\|\nabla \chi\|_{L^\infty} + \|\Delta \chi\|_{L^\infty}). 
\end{equation}
Recall the definition of $g_z$ in \eqref{eq:gz}. It is easy to verify that $g_z\in W^{1,1}(\R^3)$, and $\| |g_z| * |g_z|\|_{L^\infty}\lesssim \|g_z\|^2_{L^2}$. Since the $L^\infty_\om L^2_x$ norm is independent  of the location  of the domain of integration, we see that
\begin{align}
	 \sup_{\om \in \Om} |\lan [M, \chi]u, v \ran_{L^2(\R^3)}| 
	\leq & m_*^{-1}A \|f\|_{L^\infty_\om L^2_x}^2 \left|\oint \right| \lan |(1+\nabla)g_z|, |g_z| * |g_z| \ran_{L^2(\R^3)} \\
	\leq & m_*^{-1}A \|f\|_{L^\infty_\om L^2_x}^2 \left|\oint \right| \Big\| |g_z| * |g_z|\Big\|_{L^\infty} \|(1+\nabla)g_z\|_{L^1(\R^3)} \\
	\leq & m_*^{-1}A \|f\|_{L^\infty_\om L^2_x}^2 \left|\oint \right| \Big\| \|g_z\|^2_{L^2(\R^3)} \|g_z\|_{W^{1,1}(\R^3)} \\ 
	=: & m_*^{-1}A C_{\beta, \mu}\|f\|_{L^\infty_\om L^2_x}^2,  \label{eqn:main3}
\end{align}
where
\begin{equation} \label{eq:Cbetamu}
    C_{\beta, \mu} = \left|\oint \right| \Big\| \|g_z\|^2_{L^2(\R^3)} \|g_z\|_{W^{1,1}(\R^3)} < \infty.
\end{equation}

\textbf{Step 5: Conclusion} \\
By definition \eqref{eqn:psiDef} and \eqref{eq:L-1}, we see that
\begin{equation}
	\|\psi\|_{L^2(Q)} \leq c_0^{-1} m_*^{-1} \|f\|_{L^2(Q)}.
\end{equation}
Combining with equations \eqref{eqn:main1}, \eqref{eqn:main2}, and \eqref{eqn:main3} to obtain
 \begin{equation}
    \sup_{\om \in \Om} \frac{\lan \psi, (-\frac{1}{4\pi}\Delta_s + M)f \ran_{L^2(Q)}}{\|\psi\|_{L^2(Q)}}   \\
	\geq   c_0 m_* \sup_{\om \in \Om} \frac{\|\chi^{1/2} f\|^2_{L^2(Q)}}{\|f\|_{L^2(Q)}} - 2C_{\beta, \mu}  A\|f\|_{L^\infty_\om L^2_x} . \label{eqn:main4}
\end{equation} 
 
Choose $\om_0 \in \Om$ so that
\begin{equation}
	\|f_{\om_0}\|_{L^2(Q)}^2 \geq \frac{1}{2} \sup_{\om \in \Om} \|f_\om\|_{L^2(Q)}^2. \label{eqn:QQp1}
\end{equation}
 Assuming $Q$ is a cube centered at the origin (or by a shift and choosing a different fundamental domain if necessary), we can find a subcube $Q' \subset Q$ with half the diameter, such that  
\begin{equation}
	\|f_{\om_0}\|_{L^2(Q')}^2 \geq \frac{1}{2^3}\|f_{\om_0}\|_{L^2(Q)}^2.
\end{equation} 
We can translation $Q'$ to be concentric to $Q$ by $h \in \R^3$ so that  
\begin{equation}
	\|T_h f_{\om_0}\|_{L^2(Q/2)}^2 \geq \frac{1}{2^3}\|f_{\om_0}\|_{L^2(Q)}^2, \label{eqn:QQp2}
\end{equation} 
where $T_h$ is the translation by $h$ and $Q/2$ is the concentric cube to $Q$ with half the side length. 
Thus, we can choose bump functions $\chi$ by a translation and a dilation from a reference $O(1)$ bounded $C_c^\infty$ function so that $\chi \mid_{Q/2} = 1$ and
\begin{equation}
	\|\nabla^m \chi\|_{L^\infty} \leq C_1 \ell(Q)^{-m}, \label{eqn:nablachi}
\end{equation}
where $C_1$ is independent of the size of $Q$ and $\ell(Q)$ is the diameter of $Q$. Since the operator $-\frac{1}{4\pi}\Delta_s + M$ is translation invariant, and the $L^\infty_w L^2_x$ norm is independent of the choice of $Q$, we may replace $f$ by any translations of $f$. In doing so to \eqref{eqn:main4} and using \eqref{eqn:QQp1}, \eqref{eqn:QQp2}, and the fact $\chi\mid_{Q/2} = 1$, we obtain
\begin{align}
c_0 m_* \sup_{\om \in \Om} \frac{\|\chi^{1/2} T_h f\|^2_{L^2(Q)}}{\| T_h f\|_{L^2(Q)}} -&  2A C_{\beta, \mu} \| T_h f\|_{L^\infty_\om L^2_x} \notag \\
	\geq & c_0m_* \sup_{\om \in \Om} \frac{\|T_h f\|^2_{L^2(Q/2)}}{\| T_h f\|_{L^2(Q)}} - 2A C_{\beta, \mu} \|f\|_{L^\infty_\om L^2_x}\notag \\
	\geq & c_0m_* \frac{\|T_h f_{\om_0}\|^2_{L^2(Q/2)}}{\|f\|_{L^\infty_\om L^2_x}} - 2A C_{\beta, \mu} \|f\|_{L^\infty_\om L^2_x} \notag \\
	\geq & \left(\frac{c_0m_*}{2^4}  - 2A C_{\beta, \mu} \right)\,  \|f\|_{L^\infty_\om L^2_x}  . \label{eqn:final}
\end{align}
It follows by translation invariance of $-\frac{1}{4\pi}\Delta_s + M$ and $Q$-independence of the $L^\infty_\om L^2_x$ norm, \eqref{eqn:start1}, \eqref{eqn:main4}, \eqref{eqn:nablachi}, and \eqref{eqn:final},
\begin{align}
	\|(-\frac{1}{4\pi}\Delta_s + M)f\|_{L^\infty_\om L^2_x} =& \|(-\frac{1}{4\pi}\Delta_s + M)T_hf\|_{L^\infty_\om L^2_x} \notag \\
	\geq& \sup_{\om \in \Om} \frac{\lan \psi, (-\frac{1}{4\pi}\Delta_s + M) T_hf \ran_{L^2(Q)}}{\|\psi\|_{L^2(Q)}} \notag \\
	\geq & \left(\frac{c_0m_*}{16}  - C_{\beta,\mu}   {\ell(Q)^{-1}} \right) \|f\|_{L^\infty_\om L^2_x}  , \label{eqn:main5}
\end{align}
where $C_{\beta,\mu}$ is a constant multiple of the one defined in \eqref{eq:Cbetamu}. 
This proves Theorem \ref{thm:LLowerBound-final}.
\end{proof}

\section{Nonlinear estimate} \label{sec:nonlin-anal}

\begin{proof}[Proof of Theorem \ref{thm:nonlin}]
 By the resolvent identity in \eqref{eqn:resolventID} and using the Cauchy-integral, we arrive at an explicit formula for $N$ (see \eqref{eqn:nonlinDef}):
 
\begin{equation}
	N(\phi) :=  \den \oint (z-(-\Delta - \phi))^{-1} \phi (z+\Delta)^{-1}\ +M\phi, \label{eqn:nonlinExplicit}
\end{equation} 
where $\oint$ is defined through \eqref{eqn:oint-def}.

Applying the resolvent identity \eqref{eqn:resolventID} to \eqref{eqn:nonlinExplicit} repeatedly with $A = -\Delta - \phi$ and $B = -\Delta$, we arrive at
\begin{equation}
	N(\phi) = \sum_{n \geq 2}  (-1)^{n-1}\oint\den (z+\Delta)^{-1} (\phi (z+\Delta)^{-1})^n , \label{eqn:NexpandinNn}
\end{equation}
whenever the series converges in $L^\infty_\om L^2_x$. Our goal is to estimate the difference in the individual $n$-th order nonlinearities. That is, let
\begin{align}
     N_n(\phi) := (-1)^{n-1}\oint \den (z+\Delta)^{-1} (\phi (z+\Delta)^{-1})^n \label{eqn:N-n-def}.
\end{align}
We would like to estimate $N_n(\phi_1) - N_n(\phi_2)$.  To do so, we use 
\begin{align}
    a^b - b^n = (a-b)b^{n-1} + a(a-b)b^{n-2} + \cdots + a^{n-1}(a-b).    
\end{align} 
Applying this to the $n$-fold products of resolvents in $N_n(\phi_1)$ and $N_n(\phi_2)$, and writing out the first term explicitly, we see that
\begin{align}
 N_n(\phi_1)& - N_n(\phi_2)	 \notag \\
	=& (-1)^{n-1}\oint \den (z+\Delta)^{-1}  \Big( (\phi_1 (z+\Delta)^{-1})^n - (\phi_2 (z+\Delta)^{-1})^n \Big)  \label{eqn:NnDef} \\
	=& (-1)^{n-1}\oint \den  {(z+\Delta)^{-1} (\phi_1 - \phi_2)(z+\Delta)^{-1}}(\phi_\# (z+\Delta)^{-1})^{n-1} \\
	&+ \text{$n-1$ similar terms}
\end{align}
in the $L^\infty_\om L^2_x$-norm, where $\phi_\#$ denotes $\phi_1$ or $\phi_2$.
 Let $g_z(x-y)$ be the integral kernel of $(z+\Delta)^{-1}$ as in \eqref{eq:gz}.  
Then the $n$-th order (difference in) nonlinearity \eqref{eqn:NnDef} becomes
\begin{align}
	 \big(N_n(\phi_1)& - N_n(\phi_2)\big)(x) \notag \\
	=& (-1)^{n-1}\oint \int dy_1 \cdots dy_n g_z(y_1)g_z(y_1-y_2) \cdots g_z(y_{n-1}-y_n)g_z(y_n) \notag \\
		&\times (\phi_1-\phi_2)(x-y_1)  {\prod_{i=2}^{n}} \phi_\#(x-y_i)  \notag \\
		&\quad +\text{$n-1$ similar terms}. \notag
\end{align}
We may take the sup-norm on the factor $\prod_{i=2}^{n-1} \phi_\#(x-y_i)$.  {Using the notation $\left|\oint\right|$ in \eqref{eqn:oint-def}, we see that}
\begin{align}
	 \Big| \big(N_n(\phi_1)& - N_n(\phi_2)\big)(x) \Big| \notag \\
	\leq & \|\phi_\#\|_{L_\omega^\infty L^\infty_x(Q)}^{n-1} \left|\oint\right|  \int dy_1 |g_z|(y_1)|\phi_1-\phi_2|(x-y_1) \underbrace{(|g_z| * \cdots * |g_z|)}_{n}(y_1)   \notag\\
		&\quad +\text{$n-1$ similar terms}  \notag \\
	=& \|\phi_\#\|_{L_\omega^\infty L^\infty_x(Q)}^{n-1} \left|\oint\right|  k_z *|\phi_1-\phi_2|\quad +\text{$n-1$ similar terms}   \label{eqn:Ndiff1}\\
	=& \|\phi_\#\|_{L_\omega^\infty L^\infty_x(Q)}^{n-1} \left|\oint\right|  k_z *|\phi_1-\phi_2|\quad +\text{$n-1$ similar terms},  
\end{align}
where $k_z(x) := |g_z|(x) (|g_z| * \cdots * |g_z|)(x)$. It follows by \eqref{eqn:Ndiff1} that
\begin{equation}
	\| N_n(\phi_1) - N_n(\phi_2)\|_{L^2(Q)} \leq n\|\phi_\#\|_{L_\omega^\infty L^\infty_x(Q)}^{n-1} \left|\oint\right| \|k_z*|\phi_1-\phi_2|\|_{L^2(Q)}. \label{eqn:Ndiff2}
\end{equation}
By H\"{o}lder's inequality, we note that
\begin{align}
    	\|k_z*|\phi_1-\phi_2|\|_{L^2(Q)}^2 =& \int_Q dx \left( \int_{\R^3} dy k_z(x-y) |\phi_1 - \phi_2|(y) \right)^2  \notag\\
		\leq & \int_Q dx \|k_z\|_{L^1(\R^3)} \int_{\R^3} dy |k_z|(x-y) |\phi_1 - \phi_2|^2(y)  \notag\\
		=&  \int_{\R^3} dy\|k_z\|_{L^1(\R^3)}|\phi_1 - \phi_2|^2(y) \int_Q dx   |k_z|(x-y)     \notag \\
			=&  \|k_z\|_{L^1(\R^3)}\int_{\R^3} dy|\phi_1 - \phi_2|^2(y) \|k_z\|_{L^1(Q-y)}    \notag \\
				\le & \|k_z\|_{L^1(\R^3)}\, \sup_{\ell \in \lat }\|\phi_1 - \phi_2\|_{L^2(Q+\ell)}^2 \, \sum_{\ell \in \lat} \sup_{x\in Q}\|k_z\|_{L^1(Q-\ell-x)} \notag \\
		\le & 8\|k_z\|_{L^1(\R^3)}^2\, \sup_{\ell \in \lat }\|\phi_1 - \phi_2\|_{L^2(Q+\ell)}^2,\label{eqn:Ndiff3}
\end{align}
where  $\lat$ is the Bravais lattice as in \eqref{eq:lat}. And in the last inequality for \eqref{eqn:Ndiff3} we used the estimate 
\[\sum_{\ell \in \lat} \sup_{x\in Q}\|k_z\|_{L^1(Q-\ell-x)}\le \sum_{\ell \in \lat} 2^3\|k_z\|_{L^1(Q-\ell)}\le 2^3\|k_z\|_{L^1(\R^3)} \]
since for $x\in Q$, $Q-\ell-x$ can be covered by at most $2^3$ translated cubes of $Q-\ell$ in $\R^3$. 

 Therefore,
\begin{equation}
	\|k_z*|\phi_1-\phi_2|\|_{L_\omega^\infty L^2(Q)}^2 \le  8\|\phi_1 - \phi_2\|_{L_\omega^\infty L^2_x(Q)}^2 \|k_z\|_{L^1(\R^3)}^2. 
\end{equation}
 
Combining \eqref{eqn:Ndiff2} and \eqref{eqn:Ndiff3}, we arrive at
 \begin{equation}
	\| N_n(\phi_1) - N_n(\phi_2)\|_{L_\omega^\infty L^2(Q)} \leq 8n \|\phi_\#\|_{L_\omega^\infty L^\infty_x(Q)}^{n-1} \|\phi_1 - \phi_2\|_{L_\omega^\infty L^2_x(Q)} \left|\oint\right|  \|k_z\|_{L^1(\R^3)}.\label{eqn:Ndiff4}
\end{equation} 
Finally, by definition of $k_z$ below equation \eqref{eqn:Ndiff1} and repeated applications of Young's inequality, we note that
\begin{equation}
	\|k_z\|_{L^1(\R^3)} \leq  \|g_z\|_{L^2(\R^3)}^2\|g_z\|_{L^1(\R^3)}^{n-1}.
\end{equation}
Define
\begin{equation}
 {C_{\beta, \mu,n} } := \Big(8n\left|\oint\right| \|g_z\|_{L^2(\R^3)}^2\|g_z\|_{L^1(\R^3)}^{n-1} \Big)^{1/n}< \infty.
\end{equation}
Recall the definition of $\left|\oint\right|$ in the \eqref{eqn:oint-def} and the exponential decay of the  Fermi-Dirac $f_{FD}$ as $Re z\to \infty$. We note that $\left|\oint\right| \|g_z\|_{L^2(\R^3)}^2\|g_z\|^{n-1}_{L^1(\R^3)}$ is integrable with respect to $z$. Therefore,   $C_{\beta,\mu,n}$ is bounded for given $\beta$ and $\mu$ fixed, uniformly in $n$ (since there is a $1/n$-th power). Thus, \eqref{eqn:Ndiff4} and the Sobolev inequality show 
 \begin{equation}
	\| N_n(\phi_1) - N_n(\phi_2)\|_{L_\omega^\infty L^2(Q)} \leq C_{\beta,\mu, n}^n \left(\|\phi_1\|_{L_\omega^\infty H^2_x(Q)}^{n-1} + \|\phi_2\|_{L_\omega^\infty H^2_x(Q)}^{n-1}\right)\|\phi_1 - \phi_2\|_{L_\omega^\infty H^2(Q)} .  \label{eqn:Ndiff5}
\end{equation} 
By \eqref{eqn:NexpandinNn}, \eqref{eqn:Ndiff5}, and the assumption that   $\|\phi_\#\|_{L^\infty_\om H^2_x} < \frac{1}{10}\sup_n C_{\beta, \mu, n}^{-1}$  for $\# = 1$ and $2$, we complete the proof of Equation \eqref{eqn:nonlin} of Theorem \ref{thm:nonlin}. 
\end{proof}

\appendix
\section{ {Stationary norms  }} \label{app:per-vol-set-up}
We briefly outline properties of the norms used in this article and provide a definition to $\den$ for the sake of completeness. A more in depth study can be found in Sections 2 and 3 of \cite{CLL}. Let $\lat$ denote a Bravais lattice in $\R^3$ and $Q$ its fundamental domain (for example, the Wigner-Seitz cell).  We will often suppress the dependence on $\om \in \Om$ for notation clarify (where $\Om$ is the probability space).  For $\ell \in  \R^3$ and $f_\om(x)=f(\om,x)$ a measurable function on $\Om \times Q$, given a 
measure preserving $\lat$-action $\tau$ on $\Om$, let $U_{\ell}$ denote the 
translation operator
\begin{equation}
    (U_{\ell} f_\om)(x) := f_{\tau_{\ell} \om}(x-\ell ).
\end{equation}
Recall from definition \ref{def:stationary-func} that a function $f \in L^p_\om L^q_x$ is said to be ($\mathcal{L}$) stationary if for $\ell\in \mathcal{L}$, 
\begin{equation}
    U_{\ell}f = f.
\end{equation}

\begin{lemma}  \label{lem:indep-of-loc-of-Q}
Let $f \in L^p_\om L^q_x$ and $1 \leq p, q \leq \infty$, the $L^p_\om L^q_x$ norm is independent of the location of the fundamental domain $Q$ translated by $\ell\in \mathcal{L}$.
\end{lemma}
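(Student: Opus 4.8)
The plan is to reduce the statement to a single pointwise (in $\om$) identity and then average over $\Om$; the only inputs are a change of variables, the stationarity relation \eqref{eq:lat}, and the fact that each $\tau_\ell$ preserves the probability measure on $\Om$. Fix $\ell\in\lat$ and write $f_\om(\cdot)=f(\cdot,\om)$.

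First I would change variables $x=y+\ell$ in the spatial integral,
\[
    \|f_\om\|_{L^q(Q+\ell)}^q=\int_{Q+\ell}|f_\om(x)|^q\,dx=\int_Q|f_\om(y+\ell)|^q\,dy,
\]
and then use stationarity in the form $f_{\tau_\ell\om}(x-\ell)=f_\om(x)$, i.e.\ (with $x=y+\ell$) $f_\om(y+\ell)=f_{\tau_\ell\om}(y)$. Substituting this in yields
\[
    \|f_\om\|_{L^q(Q+\ell)}^q=\int_Q|f_{\tau_\ell\om}(y)|^q\,dy=\|f_{\tau_\ell\om}\|_{L^q(Q)}^q,
\]
hence $\|f_\om\|_{L^q(Q+\ell)}=\|f_{\tau_\ell\om}\|_{L^q(Q)}$ for a.e.\ $\om$. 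For $q=\infty$ the identical computation applies with $\int_Q$ replaced by the essential supremum over $Q$.

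Next I would raise both sides to the power $p$ and take expectations. Since $\tau_\ell:\Om\to\Om$ is measure preserving, $\E[G\circ\tau_\ell]=\E[G]$ for every integrable $G$; applying this with $G(\om)=\|f_\om\|_{L^q(Q)}^p$ gives
\[
    \E\big[\|f_\om\|_{L^q(Q+\ell)}^p\big]=\E\big[\|f_{\tau_\ell\om}\|_{L^q(Q)}^p\big]=\E\big[\|f_\om\|_{L^q(Q)}^p\big],
\]
which is exactly the claim that the $L^p_\om L^q_x$ norm computed over $Q+\ell$ equals that computed over $Q$. When $p=\infty$ one instead uses that measure-preserving maps preserve null sets, so the essential supremum over $\om$ of $\|f_{\tau_\ell\om}\|_{L^q(Q)}$ coincides with that of $\|f_\om\|_{L^q(Q)}$.

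I do not expect a genuine obstacle here. The two places that require a bit of care are: keeping the direction of the $\tau_\ell$-action straight when feeding the change of variables into \eqref{eq:lat}, and treating the endpoint exponents ($p=\infty$ and/or $q=\infty$) by systematically replacing integrals with essential suprema, the argument being otherwise verbatim. If one wanted the stronger statement that the norm is independent of \emph{every} choice of fundamental domain (not merely of lattice translates of a fixed $Q$), one would in addition cut any two fundamental domains into finitely many pieces that differ by lattice vectors up to null sets and, for $p\neq q$, invoke ergodicity of $\tau$; but this is not needed for the lemma as stated.
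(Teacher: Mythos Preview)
Your proof is correct and follows essentially the same route as the paper's: establish the pointwise identity $\|f_\om\|_{L^q(Q+\ell)}=\|f_{\tau_\ell\om}\|_{L^q(Q)}$ from stationarity, then average using that $\tau_\ell$ is measure preserving. You have simply written out the change of variables and the endpoint cases $p=\infty$, $q=\infty$ more explicitly than the paper does.
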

\begin{proof}
Let $\tau$ denote the 
measure preserving map on $\Om$. It suffices to note that $h(\om) = \|f_\om\|_{L^q(Q)}$ satisfies:
\begin{equation}
    \|f_{\tau_{\ell} \om}\|_{L^q(Q)} = \|f_\om\|_{L^q(T_\ell  Q)},
\end{equation}
where $T_\ell (x) = x+\ell$ is the shift by $\ell\in \mathcal{L}$. Since $\tau_{\ell}$ is 
measure preserving, we see that
\begin{equation}\label{eq:app1}
    \E \|f_\om\|_{L^q(Q)}^p = \E \|f_\om\|_{L^q(T_\ell  Q)}^p
\end{equation}
for any $\ell \in  \mathcal{L}$. 
\end{proof}

A random operator,  $A$, on $L^2(\R^3)$ is said to be ($\lat$) stationary if $A$ commutes with $U_{\ell}$ for all $\ell \in  \R^3$.
Let $\Tr$ denote the usual trace on $L^2(\R^3)$. Given a random operator $A$, its density $\den A$ is a measurable function on $\Om \times Q$, if it exists, defined via the Riesz representation theorem and the formula
\begin{equation}
    \E \Tr f A = \E \int_{\R^3} f\den A \label{eqn:den-def},
\end{equation}
for all simple $C^\infty_c(\R^3)$-valued function $f$. That is, $f$ is of the form
\begin{equation}
    f = \sum_\alpha I_{S_\alpha} f_\alpha
\end{equation}
where the sum is a finite sum over $\alpha$, $I_{S_\alpha}$ is the indicator function of some measurable set $S_\alpha \subset \Om$, and $f_\alpha$ is a $C^\infty_c(\R^3)$ function. Moreover, the $f$ on the left hand side of \eqref{eqn:den-def} is regarded as a multiplication operator on $L^2(\R^3)$. When $A$ has an integral kernel $A(x,y)$ a.s., that is
\begin{equation}
    (Af)(x) = \int_{\R^3} A(x,y)f(y) dy,
\end{equation}
then,
\begin{equation}
    (\den A)(x) = A(x,x),
\end{equation}
whenever $A(x,x)$ is defined and unambiguous.

We outline a few special cases used in the paper where $\den A$ is defined. Define trace per volume $Q$ via  
\begin{equation}
    \Tr_Q A := \frac{1}{|Q|}\Tr {\bf 1}_Q A {\bf 1}_Q,
\end{equation}
where ${\bf 1}_Q$ is the indicator function of $Q$. If $A$ is stationary, $\Tr_Q$ is independent of translates of $Q$. For $1 \leq p, q \leq \infty$, there is a family of stationary Schatten spaces $L_\om^q\mathfrak{S}^p$ associated to $\Tr_Q$, given via the completions of
\begin{equation}
    L_\om^q\mathfrak{S}^p = \{ A_\om \in \mathcal{B}(L^2(\R^3)) \text{ and stationary} : \|A_\om\|_{L_\om^q\mathfrak{S}^p} < \infty \}
\end{equation}
with the norm  
\begin{equation} 
    \|A\|_{L_\om^q\mathfrak{S}^p}^q = \E \left(\Tr_Q (A^*A)^{p/2}\right)^{q/p}.
\end{equation} 
If $p$ or $q=\infty$, the usual sup norm is assumed. We remark that the usual (nonrandom) Schatten norm for operators on $X \subset \R^3$ is given as
\begin{equation}
    \|A\|_{\mathfrak{S}^p(X)}^p = \Tr {\bf 1}_X (A^*A)^{p/2} {\bf 1}_X,
\end{equation}
where ${\bf 1}_X$ is the indicator function on $X$. 

Let $R = (1-\Delta)^{-1}$. We have the following result. 
\begin{lemma} \label{lem:den-to-schatten}
Suppose that $A \in L_\om^\infty\mathfrak{S}^2$, then $\den  AR$ and $\den  RA \in L^\infty_\om L^2_x$. Moreover,
\begin{equation}
    \|\den  AR \|_{L^\infty_\om L^2_x}, \|\den R A\|_{L^\infty_\om L^2_x}  \lesssim \|A\|_{L_\om^\infty\mathfrak{S}^2}. \label{eqn:den-bound}
\end{equation}
\end{lemma}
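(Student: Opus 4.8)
The plan is to reduce the whole statement to one Cauchy--Schwarz estimate built on the explicit kernel of $R=(1-\Delta)^{-1}$. In $\R^3$ that kernel is the Yukawa potential $G(x-y)=\frac{e^{-|x-y|}}{4\pi|x-y|}$, which lies in $L^1(\R^3)\cap L^2(\R^3)$ with $\|G\|_{L^2(\R^3)}^2=\frac1{8\pi}$; only the membership $G\in L^2(\R^3)$ is actually needed. Two preliminary facts set everything up. First, the hypothesis $A\in L^\infty_\om\mathfrak{S}^2$ forces $A_\om$ to possess, for a.e.\ $\om$, a measurable integral kernel $A_\om(x,y)$ with $\int_Q\|A_\om(\cdot,y)\|_{L^2(\R^3)}^2\,dy=\|A_\om\mathbf 1_Q\|_{\mathfrak{S}^2}^2=|Q|\,\Tr_Q(A_\om^*A_\om)\le|Q|\,\|A\|_{L^\infty_\om\mathfrak{S}^2}^2$, i.e.\ the kernel has $L^2$ ``columns'' uniformly in $\om$ (and, dually, $\|\mathbf 1_Q A_\om\|_{\mathfrak{S}^2}^2=|Q|\,\Tr_Q(A_\om A_\om^*)$ controls its $L^2$ ``rows''). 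Second, $RA$ is locally trace class: $\mathbf 1_Q RA\,\mathbf 1_Q=\sum_{\ell\in\lat}(\mathbf 1_Q R\,\mathbf 1_{Q+\ell})(\mathbf 1_{Q+\ell}A\,\mathbf 1_Q)$ is an absolutely summable series of products of a trace-class and a Hilbert--Schmidt operator, the summability coming from the exponential decay of $G$ together with $\sum_\ell\|\mathbf 1_{Q+\ell}A\,\mathbf 1_Q\|_{\mathfrak{S}^2}^2=\|A\,\mathbf 1_Q\|_{\mathfrak{S}^2}^2<\infty$. Consequently $\den RA$ is unambiguously the diagonal $(RA)_\om(x,x)$ and coincides with the abstract $\den$ of \eqref{eqn:den-def}.

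For $\den RA$ I would then argue in four short steps. (i) The kernel is $(RA)_\om(x,y)=\int_{\R^3}G(x-z)A_\om(z,y)\,dz$; since $G(x-\cdot)\in L^2$ and $A_\om(\cdot,y)\in L^2$ for a.e.\ $y$, this converges absolutely, so the diagonal $(\den RA)_\om(x)=\int_{\R^3}G(x-z)A_\om(z,x)\,dz$ is defined for a.e.\ $x$. (ii) Cauchy--Schwarz in $z$ yields $|(\den RA)_\om(x)|\le\|G\|_{L^2}\,\|A_\om(\cdot,x)\|_{L^2(\R^3)}$. (iii) Squaring and integrating over $x\in Q$, together with the preliminary bound, gives $\|\den RA\|_{L^2(Q)}^2\le\|G\|_{L^2}^2\int_Q\|A_\om(\cdot,x)\|_{L^2}^2\,dx=\|G\|_{L^2}^2\,\|A_\om\mathbf 1_Q\|_{\mathfrak{S}^2}^2=\|G\|_{L^2}^2\,|Q|\,\Tr_Q(A_\om^*A_\om)$. (iv) Taking $\sup_\om$ produces $\|\den RA\|_{L^\infty_\om L^2_x}\le\|G\|_{L^2}\,|Q|^{1/2}\,\|A\|_{L^\infty_\om\mathfrak{S}^2}\lesssim\|A\|_{L^\infty_\om\mathfrak{S}^2}$, which is \eqref{eqn:den-bound} for $\den RA$.

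The case of $\den AR$ runs in parallel: since $G$ is even, $(\den AR)_\om(x)=\int_{\R^3}A_\om(x,z)\,G(x-z)\,dz$, Cauchy--Schwarz in $z$ bounds this by $\|G\|_{L^2}\,\|A_\om(x,\cdot)\|_{L^2(\R^3)}$, and integrating over $Q$ produces $\|\mathbf 1_Q A_\om\|_{\mathfrak{S}^2}^2=|Q|\,\Tr_Q(A_\om A_\om^*)$ (one may equally run this through the identity $(AR)^*=RA^*$, so $\den AR=\overline{\den RA^*}$, and quote the previous paragraph for $A^*$). Either route makes transparent that the only substantive point --- and the one I expect to be the main obstacle --- is to bound $\sup_\om\Tr_Q(A_\om A_\om^*)$ (equivalently $\|A^*\|_{L^\infty_\om\mathfrak{S}^2}$) by $\|A\|_{L^\infty_\om\mathfrak{S}^2}^2=\sup_\om\Tr_Q(A_\om^*A_\om)$: the $L^2$ ``row'' mass of the kernel of $A$ is not visibly controlled by $\Tr_Q(A^*A)$ without invoking the translation structure. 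For the expectation norm $L^2_\om\mathfrak{S}^2$ this equivalence is immediate since each $\tau_\ell$ is measure preserving; for the essential-supremum norm one must exploit stationarity, writing both traces as lattice sums of block Hilbert--Schmidt norms, using the $*$-symmetry $\|\mathbf 1_{Q+\ell}A_\om\,\mathbf 1_Q\|_{\mathfrak{S}^2}=\|\mathbf 1_Q A_\om^*\,\mathbf 1_{Q+\ell}\|_{\mathfrak{S}^2}$ and the kernel covariance under the $\lat$-action to match terms --- this is exactly the $*$-invariance of the stationary Schatten-$2$ norm recorded in the formalism of \cite{CLL}. Granting it, $\den AR$ is estimated verbatim as $\den RA$; all remaining steps are routine computations.
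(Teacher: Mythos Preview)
Your argument is correct, and it reaches the same destination as the paper's proof by a somewhat different path. The paper proceeds by duality: it pairs $\den RA$ against a test function $\phi\in L^2(Q)$, writes $\langle\phi,\den RA\rangle=\Tr\mathbf 1_Q(\phi R)A\mathbf 1_Q$, applies the Schatten H\"older inequality, and then invokes the Kato--Seiler--Simon inequality to bound $\|\phi R\|_{\mathfrak S^2(\R^3)}\lesssim\|\phi\|_{L^2}$ (the point being $3/2<2$ in dimension three). Your route is more hands-on: you write the Yukawa kernel $G$ of $R$ explicitly, note $G\in L^2(\R^3)$, and apply Cauchy--Schwarz directly on the diagonal of the integral kernel of $RA$. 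The two arguments are morally the same --- the $p=2$ case of Kato--Seiler--Simon is exactly the statement that the Fourier transform of $(1+|\xi|^2)^{-1}$ lies in $L^2(\R^3)$ --- but your version is more elementary and avoids quoting KSS by name, while the paper's version has the advantage of working entirely at the level of operators without ever manipulating kernels pointwise.

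You are also more scrupulous than the paper about the $AR$ case: the paper dismisses it with ``treated similarly,'' but as you correctly observe, the symmetric argument lands on $\Tr_Q(A_\om A_\om^*)$ rather than $\Tr_Q(A_\om^*A_\om)$, and equating these requires the cyclicity of the trace per unit volume for stationary operators. Your identification of this as the one nontrivial ingredient, and your appeal to the stationary Schatten formalism of \cite{CLL} for it, is exactly right.
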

 
\begin{proof}
We prove the case for $RA$ only. The case for $AR$ is treated similarly.  
We use the $L^2(Q)$-$L^2(Q)$ duality. Let $\phi \in L^2(Q)$ and apply H\"{o}lder's inequality to 
\begin{equation} 
    \Tr {\bf 1}_Q (\phi R) A {\bf 1}_Q \leq \|\varphi R\|_{\mathfrak{S}^2(\R^3)} \| A {\bf 1}_Q\|_{\mathfrak{S}^2(\R^3)}.
\end{equation}
Since $\frac{3}{2} < 2$ and we are in dimension 3, the Kato-Seiler-Simon inequality shows
\begin{equation}
    \Tr_{Q} [(\phi R) A] \lesssim \frac{1}{|Q|}\|\phi\|_{L^2(Q)} \| A {\bf 1}_Q\|_{\mathfrak{S}^2(\R^3)}.
\end{equation}
Since
\begin{equation}
    \frac{1}{|Q|}\| A {\bf 1}_Q\|_{\mathfrak{S}^2(\R^3)}^2 = \frac{1}{|Q|}\Tr {\bf 1}_Q A^* A {\bf 1}_Q = \|A\|_{ \mathfrak{S}^2(Q)}^2,
\end{equation}
we see that
\begin{equation}
\Tr_{Q} [(\phi R) A] \lesssim \|\phi\|_{L^2(Q)} \|A\|_{\mathfrak{S}^2(Q)}.
\end{equation}
The $L^2(\Om)$-$L^2(\Om)$ duality and the Riesz representation theorem show that there is some $\den RA \in L^2(Q)$ such that
\begin{equation}
    \frac{1}{|\Om|}\int_Q \varphi \den (RA) = \Tr_Q \varphi RA \label{eqn:denRA-est-1}
\end{equation}
for all $\varphi \in L^2(Q)$ and
\begin{equation}
    \|\den RA\|_{L^2(Q)} \leq \|A\|_{\mathfrak{S}^2(Q)}. \label{eqn:denRA-est-2}
\end{equation}
Since we have not specified the dependence of $A$ on $\om$, we see that \eqref{eqn:denRA-est-1} and \eqref{eqn:denRA-est-2} hold for all $\om$ a.s.. In particular, $\den RA_\om$ is well defined for a.e. $\om$. Taking $\sup_{\om \in \Om}$ in \eqref{eqn:denRA-est-2} in, we see that \eqref{eqn:den-bound} is proved.
\end{proof}

\Addresses


\begin{thebibliography}{}
\bibitem{AACan} A. Anantharaman, and \'E. Canc\`es (2009), Existence of minimizers for Kohn--Sham models in quantum chemistry. Ann. Inst. H. Poincar\'e Anal. Non Lin\'eaire {\bf 26}, 2425--2455.

\bibitem{CL}
R. Carmona, and J. Lacroix (2012), Spectral theory of random Schr\"odinger operators. Springer Science \& Business Media.

\bibitem{CDL}
\'E. Canc\`es, A. Deleurence, and M. Lewin (2008), A new approach to the modeling of
local defects in crystals: The reduced Hartree-Fock case. Comm. Math. Phys.  {\bf 281}, 129–177.

\bibitem{CLL}
\'E. Canc\`es, S. Lahbabi, and M. Lewin (2013),  Mean-field models for disordered crystals.  J. Math. Pures Appl. {\bf 100(2)}, 241-274.

\bibitem{CSL}
\'E. Canc\`es, G. Stoltz, and M. Lewin (2006), The electronic ground-state energy problem: A new reduced density matrix approach. J. Chem. Phys. {\bf 125(6)}, 064101.

\bibitem{CLeBL}  
I. Catto, C. Le Bris, and P.-L. Lions (2001), On the thermodynamic limit for Hartree-Fock type models. Ann. Inst. H. Poincar\'e Anal. Non Lin\'eaire {\bf 18}, Issue 6, 687 - 760.

\bibitem{CLeBL2}  
I. Catto, C. Le Bris, and P. -L. Lions (2002), On some periodic Hartree-type models for crystals. Ann. Inst. H. Poincar\'e Anal. Non Lin\'eaire {\bf 19}, Issue 2, 143 - 190. 



\bibitem{CS} I. Chenn, and I. M. Sigal (2020), On Derivation of the Poisson–Boltzmann Equation. J. Stat. Phys.  {\bf 180}, 954–1001.

\bibitem{CS1} I. Chenn, and I. M. Sigal (2019), On Effective PDEs of Quantum Physics. In: D'Abbicco M., Ebert M., Georgiev V., Ozawa T. (eds) New Tools for Nonlinear PDEs and Application. Trends in Mathematics. Birkh\"auser, Cham.


\bibitem{DS} M. Duerinckx, and C. Shirley (2021), A new spectral analysis of stationary random Schr\"odinger operators. Journal of Mathematical Physics, 62(7), 072106.
 

\bibitem{ELu} W. E, and J. Lu (2013), The Kohn-Sham equation for deformed crystals. Mem. Amer. Math. Soc. {\bf 221}, 1040. 






\bibitem{KS} W. Kohn, and L. J. Sham (1965), Self-consistent equations including exchange and correlation effects. Physical Review. {\bf 140} (4A), A1133-A1138.


\bibitem{Kir} W. Kirsch (2008), 
  An invitation to random Schr\"odinger operators. 
With an appendix by Fr\'ed\'eric 
Klopp. Panor. Synth\`eses, 25, Random Schr\"odinger operators, 1--119, Soc. Math. France, Paris. 



\bibitem{La} S. Lahbabi (2014), The reduced Hartree–Fock model for short-range quantum crystals with nonlocal defects.  Annales Henri Poincar\'e. Vol. 15. No. 7. Springer Basel. 


\bibitem{Lev} A. Levitt (2020), Screening in the finite-temperature reduced Hartree-Fock model. Arch. Ration. Mech. Anal.  {\bf 238.2}, 901-927.


\bibitem{Levy} M. Levy (1979), Universal variational functionals of electron densities, first order density matrices, and natural spin-orbitals and solutions of the $v$-representability problem. Proc. Natl. Acad. Sci. USA {\bf 76}, 6062 - 6065.


\bibitem{Levy2} M. Levy (1982), Electron densities in search of Hamiltonians. Phys. Rev. A {\bf 26}, 1200 - 1208.


\bibitem{Lieb3} E. H. Lieb  (1981), Thomas-Fermi and related theories of atoms and molecules. Rev. Modern Phys. {\bf 53}, no. 4, 603–641.

\bibitem{Lieb} E. H. Lieb (1983), Density Functionals for Coulomb Systems. Int. J. Quantum Chem. {\bf  24}, Issue 3, 243 - 277.

\bibitem{LS}
M. Lewin, and J. Sabin (2015), The Hartree equation for infinitely many particles I. Well-posedness theory. Comm. Math. Phys. {\bf 334(1)}, 117-170.

\bibitem{Nier} F. Nier (1993), A variational formulation of Schr\"{o}dinger-Poisson systems in dimension $d \leq 3$. Comm. Partial Differential Equations {\bf 18} (7 and 8), 1125-1147.  


 
\bibitem{PN} E. Prodan, and P. Nordlander (2003). On the Kohn-Sham equation with periodic back-
ground potential. J. Stat. Phys. {\bf 111}, Issue 3-4, 967-992.




\end{thebibliography}
\end{document}